\documentclass[a4paper,12pt]{article}

\usepackage[english]{babel}
\usepackage{authblk}
\usepackage{dsfont}
\usepackage{amsfonts}
\usepackage{mathrsfs}
\usepackage{amssymb}        
\usepackage{amsmath}
\usepackage{graphicx,caption}
\usepackage{float}
\usepackage[a4paper]{geometry} 
\usepackage{verbatim}
\usepackage{pstricks}
\usepackage{amsthm}
\usepackage{mathrsfs}

\addtolength{\textwidth}{1in} \addtolength{\hoffset}{-0.5in}  
\addtolength{\textheight}{1in} \addtolength{\voffset}{-0.5in}     

\newtheorem{Theorem}{Theorem}

\newtheorem{Corollary}{Corollary}
\newtheorem{Lemma}{Lemma}

\newtheorem{Remark}{Remark}


\newcommand{\R}{\mathbb{R}}

\newcommand{\Z}{\mathbb{Z}}
\newcommand{\N}{\mathbb{N}}


\newcommand{\ed}{\mathrm{e}}



\begin{document}

\title{Equivalence of ensembles, condensation and glassy dynamics in the Bose-Hubbard Hamiltonian}

\author[1]{Fran\c cois Huveneers}
\affil[1]{
Ceremade,
UMR-CNRS 7534, 
Universit\'e Paris Dauphine, 
PSL Research University, 
75775 Paris cedex 16, 
France
}

\author[2]{Elias Theil}
\affil[2]{
Vienna University of Technology, 
1040 Vienna, 
Austria
}

\date{\today}
\maketitle 

\begin{abstract}
\noindent
We study mathematically the equilibrium properties of the Bose-Hubbard Hamiltonian in the limit of a vanishing hopping amplitude.
This system conserves the energy and the number of particles. 
We establish the equivalence between the microcanonical and the grand-canonical ensembles 
for all allowed values of the density of particles $\rho$ and density of energy $\varepsilon$. 
Moreover, given $\rho$, we show that the system undergoes a transition as $\varepsilon$ increases, 
from a usual positive temperature state to the infinite temperature state where a macroscopic excess of energy condensates on a single site.
Analogous results have been obtained by S.~Chatterjee \cite{chatterjee_2017} for a closely related model.
We introduce here a different method to tackle this problem, hoping that it reflects more directly the basic understanding stemming from statistical mechanics. 
We discuss also how, and in which sense, the condensation of energy leads to a glassy dynamics.  
\end{abstract}

\pagebreak

\section{Introduction}
Condensation is a process that occurs spontaneously under appropriate thermodynamical conditions. 
For example, in a Bose-Einstein condensate, 
a macroscopic fraction of non-interacting bosons occupy the ground-state at sufficiently low temperature, see e.g.~\cite{lieb_et_al_2005}. 
Condensation into a single state is not a genuinely quantum effect though. 
Indeed, real-space condensation takes place for example in the zero-range process, a lattice gas with a stochastic dynamics, at high enough density,
see \cite{evans_hanney_2005,majumdar_2010} for reviews.
However, the jump rates of this process do not respect the detailed balance condition, 
and it is natural to seek for examples within the realm of equilibrium statistical mechanics. 
In this paper, we study the real-space condensation of the energy in the Bose-Hubbard system.
The same phenomenon in a closely related model, the discrete non-linear Schr\"odinger (DNLS) chain, was analyzed through physical arguments in \cite{rumpf_2004},
and at the mathematical level of rigor in \cite{chatterjee_2017} (see also \cite{chatterjee_kirkpatrick_2012,chatterjee_2014} as well as \cite{nam_2018}).  
The Hamiltonian in a volume $\Lambda \subset \Z^d$ is given by 
\begin{equation}\label{eq: Bose Hubbard Hamilotnian}
	H 
	\; = \; 
	H_{pot} + H_{kin}
	\; = \;
	U \sum_{x \in \Lambda} n_x^2 \; + \; J \sum_{x,y\in \Lambda, x\mathsf v y} \left( a_x^\dagger a_{y} + a_{y}^\dagger a_x \right) 
\end{equation}
with $n_x = a_x^\dagger a_x$ the number of bosons at site $x$, 
with $U>0$ the interaction strength and $J\ge 0$ the hopping amplitude, 
and where $x \mathsf v y$ means that $x$ and $y$ are nearest neighbors. 

Our eventual goal is to describe the equilibrium states and the emergence of condensation in the strongly interacting regime $U \gg J$ at positive temperature. 
Below, in this introduction, we first describe the full phase diagram at $J = 0$ and understand heuristically the origin of the condensation.
Next, we argue that most of our conclusions carry over to the case $J>0$ and we analyze the dynamical consequences. 
As it turns out, the time needed to form a condensate diverges very fast as the volume increases and, since this corresponds also to the time to reach equilibrium, 
the dynamics qualifies in this sense as a glass. 
Finally, we comment on the specificity of our approach and on the relation with previous mathematical works.   
In Section~\ref{sec: model and results}, we state precisely our mathematical results, all dealing with the case $J = 0$. 
The proofs are gathered in Sections \ref{sec: equivalence of ensembles} and \ref{sec: concentration on a single site}.

\bigskip
\textbf{Condensation at $J = 0$.}
In this limit, the Hamiltonian $H$ in \eqref{eq: Bose Hubbard Hamilotnian} reduces to a classical Hamiltonian with repulsive on-site interaction, 
and we may choose energy units so that $U = 1$. 
There are two extensive conserved quantities:
the total energy $H$ and the total number of particles $\mathcal N = \sum_{x \in \Lambda} n_x$. 
It is convenient to work with the corresponding densities $\varepsilon = H/V$ and $\rho = \mathcal N/V$, where $V = |\Lambda|$. 
Let us fix some $\rho > 0$ and let us describe the equilibrium state as $\varepsilon$ increases. 
First, for too small $\varepsilon$, there are simply no states in the system 
since any configuration of particles with $\rho > 0$ has also a minimal energy density $\varepsilon_{\mathrm{gs}}(\rho)$.
This ground state energy density is obtained by minimizing fluctuations, i.e.\@ considering configurations where the density of particles is most homogeneous. 
This yields: 
\begin{equation}\label{eq: ground state energy density}
	\varepsilon_{\mathrm{gs}}(\rho) 
	\; = \; 
	(1-(\rho - \lfloor \rho \rfloor)) \lfloor \rho \rfloor^2 + (\rho - \lfloor \rho \rfloor) (\lfloor \rho \rfloor + 1)^2  \, ,
\end{equation} 
where $\lfloor x \rfloor$ denotes the integer part of $x$.
Second, as $\varepsilon$ increases close above the ground state energy density, 
the equilibrium states of the system are described by a usual Gibbs state $\ed^{- \beta (H - \mu \mathcal N)} /Z$, 
with a positive temperature $T = (k_B \beta)^{-1}$ and chemical potential $\mu$
that can be determined by fixing the average value of the number of particles and the energy.
However, as $\varepsilon$ increases further, particles need to pile up in some places, 
since this is the only way to create a large amount of energy with few particles, 
and the repulsive potential becomes thus effectively attractive in this regime.
So, if $\varepsilon > \varepsilon_{\mathrm c} (\rho)$ for some $\varepsilon_{\mathrm c} (\rho)$, 
the micro-canonical entropy density $s_{\mathrm{mic}}$ of the system reduces as the energy increases, 
i.e.\@ $\beta  = \partial s_{\mathrm{mic}} / \partial \varepsilon \le 0$. 
Since the energy grows quadratically with the number of particles per site, 
it is impossible to set $\beta < 0$ in the Gibbs state, and the only reasonable possibility seems thus to be $\beta = 0$ for $\varepsilon > \varepsilon_{\mathrm c} (\rho)$.
Let us notice that $\partial s_{\mathrm{mic}}/ \partial \varepsilon = 0$ still allows for the total entropy to decrease, 
as long as the changes remain sub-extensive, a scenario that we will validate below. 
Incidentally, it is now also possible to guess the value of $\varepsilon_{\mathrm c} (\rho)$. 
Indeed, parametrizing the Gibbs state with $\nu = \beta \mu$ instead of $\mu$, and treating $\nu$ and $\beta$ as independent parameters, 
one can compute explicitely the energy density as a function of the particle density at $\beta=0$: 
\begin{equation}\label{eq: critical density}
	\varepsilon_{\mathrm c} (\rho) \; = \; 2 \rho^2 + \rho . 
\end{equation}
The full phase diagram, validated by our rigorous results, is shown on figure \ref{fig: phase diagram}.

\begin{figure}[t]
    	\centering
   		\includegraphics[draft=false,height = 8cm,width = 12cm]{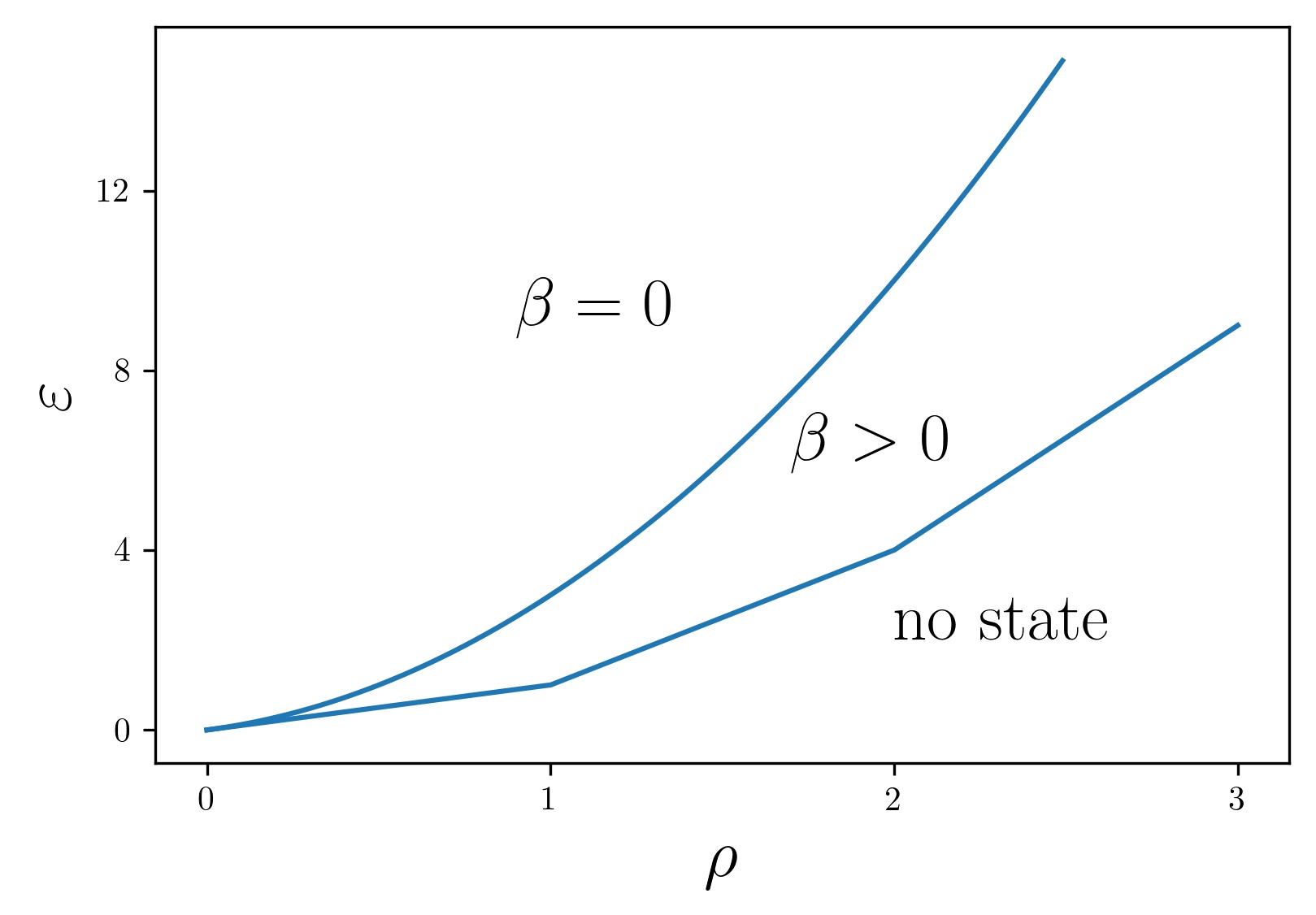}
		\captionsetup{width=.9\linewidth}
    	\caption{Phase diagram of the Hamiltonian $H$ in \eqref{eq: Bose Hubbard Hamilotnian} for $J = 0$.
		The infinite temperature phase $R_{\mathrm{inf}}$ where $\beta = 0$ 
		(assuming the Gibbs state being parametrized by the independent parameters $\nu$ and $\beta$, see the main text),
		is separated from the positive temperature phase $R_{\mathrm{pos}}$ where $\beta >0$
		by the line of critical energy density \eqref{eq: critical density}. 
		The positive temperature phase is bounded from below by the ground state line \eqref{eq: ground state energy density}.} 
    	\label{fig: phase diagram}
\end{figure}

The behavior of the system may seem paradoxical inside the infinite temperature phase since $\varepsilon > \varepsilon_c(\rho)$, 
and we may wonder wether a description of the system by a Gibbs state is actually accurate. 
In its basic form, 
the principle of equivalence of ensembles asserts that the expectation of local and bounded observables in the microcanonical ensemble 
approaches their grand-canonical expectation as the thermodynamic limit $V \to \infty$ is taken. 
In a simple case like ours, at $J = 0$, 
the classical reasoning leading to this conclusion rests only on
the hypothesis that the microcanonical entropy scales as $S_{\mathrm{mic}}(\mathcal N,H,V) \simeq V s_{\mathrm{mic}}(\rho,\varepsilon)$ 
in leading order in $V$ as $V \to \infty$, where the entropy density $s_{\mathrm{mic}}$ depends smoothly on its arguments. 
See \cite{callen}, as well as \cite{touchette_2011,touchette_2015} for a more comprehensive theory. 
There is no reason to doubt the validity of this reasoning in our case, and in Corollary~\ref{co: equivalence of ensembles local observables} below, 
we show indeed the equivalence between the micro-canonical and grand-canonical ensembles for local bounded observables. 
To reach this conclusion, we actually prove the validity of the above scaling for the microcanonical entropy, 
see Theorem~\ref{th: equivalence of ensembles relative entropy} and Remark~\ref{rem: convergence of the microcanonical entropy}. 
In particular, we show that indeed $\partial s_{\mathrm{mic}}/ \partial \varepsilon = 0$ for $\varepsilon > \varepsilon_{\mathrm c} (\rho)$. 

For this to be concretely realized, 
the excess of energy $V\Delta \varepsilon = V (\varepsilon_c - \varepsilon_c (\rho))$ 
must be concentrated on a proportion of sites that vanishes in the thermodynamic limit.
The key observation is that only about $(V \Delta \varepsilon)^{1/2}$ particles on a single site yield the macroscopic quantity of energy $V \Delta \varepsilon$ on this site. 
Hence, concentrating the excess of energy on one site does not change the particles density in the bulk nor the entropy density (in the thermodynamic limit), 
and allows for the prescribed energy density $\varepsilon$ to be realized.
This is thus a possible scenario but, using the same reasoning, 
one may obviously also conclude that the excess of energy could be concentrated on a few sites rather than a single one. 
To understand why the latter typically does not happen, one must analyze sub-extensive variations of the entropy that are no longer visible in the thermodynamic limit.
The basic reasoning is that concentrating the excess of energy $V \Delta \varepsilon$ on several sites requires more particles, 
hence the number of particles in the bulk gets lower, hence the entropy is smaller.  
To make this idea precise from a mathematical point of view, 
we will derive an upper bound on the entropy of all states with maximal number of particles per site being sufficiently small compared to $(V \Delta \varepsilon)^{1/2}$
(note that $\alpha (V \Delta \varepsilon)^{1/2}$ for any $\alpha < 1$ is here considered small). 
This is achieved by constructing the negative temperature Gibbs measure on this set of states 
(there is no trouble in taking $\beta < 0$ if the number of particles per site is bounded above). 
Gibbs states have maximal entropy, and the negative temperature will be determined explicitly with high enough precision so that the entropy
can be accurately bounded as a function of the maximal number of particles per site.  
This is the main step in proving Theorem~\ref{th:single site concentration} below, 
establishing the concentration of the excess of energy on a single site with overwhelming probability.

\bigskip
\textbf{Dynamical glass.}
We now turn to the discussion on the dynamical consequences of the condensation of the energy, but we will not attempt to make any mathematical statements on this.
To get an interesting dynamics, we need obviously to allow for some hopping and so we set here $J > 0$. 

Our first task is to argue that the main conclusions drawn at $J=0$ carry over to the case $J > 0$ as long as $J/U \ll 1$. 
The basic understanding for condensation remains the same:  
Exact diagonalization yields the bound $\| H_{kin} \| \le 2 J \rho V$ and hence, as $\varepsilon$ increases for given $\rho$, 
the potential energy has to increase so that the entropy ceases to grow starting from some threshold. 
Actually, it is possible to be more explicit here. 
Indeed, the quantum infinite temperature state $\langle A \rangle_\nu = \mathrm{Tr}(A \ed^{- \nu \mathcal N})/ \mathrm{Tr} (\ed^{-\nu \mathcal N})$, 
where $A$ is any observable, 
is no different from the classical infinite temperature state.
Since moreover $\langle H_{kin} \rangle_\nu = 0$,
the relation \eqref{eq: critical density} relating the energy density to the density of particles at infinite temperature, 
remains valid and we conjecture that it still yields the critical energy density above which the system forms a condensate. 
This is in line with the recent result in \cite{cherny_et_al_2019} 
(see also \cite{rasmussen_et_al_2000} for analogous considerations in a related classical model), 
where it is shown that all finite temperature states are such that $\varepsilon < \varepsilon_c(\rho)$. 

For $\varepsilon > \varepsilon_c (\rho)$, it is now desirable to obtain a representation of the eigenstates of $H$ as a superposition of the eigenstates of $H_{pot}$, 
i.e.\@ the classical energy states studied previously.  
For this, we admit that the system at $J>0$ is thermalizing, even though the relaxation time towards equilibrium may diverge fast with the volume as we stress below.
Besides this being the most reasonable default option, it is also suggested by recent numerical results in the DNLS chain \cite{mithun_et_al_2018}.
Assuming this, we expect all classical states at a given energy to be fully hybridized. 
Thus, if an eigenstate of $H$ is a superposition of classical states with potential energy in some energy band, 
a statement that holds typically in the micro-canonical ensembles at $J=0$ in this band, becomes valid for the individual eigenstate at $J > 0$. 
For this to be useful, it remains thus to determine the location and the width of the potential energy band. 
Let $|E \rangle$ be an eigenstate of $H$ at an energy density $\varepsilon = E/V > \varepsilon_c(\rho)$. 
First, quantum typicality, or the Eigenstate Thermalization Hypothesis \cite{deutsch_1991,srednicki_1994}, yields
$\langle E | H_{kin} | E \rangle = \langle H_{kin} \rangle_\nu = 0$.
The use of the grand-canonical ensemble here is a priori questionable 
since it does not correctly reproduce the value of all local unbounded observables when $\varepsilon > \varepsilon_c (\rho)$.
However $H_{kin}$ is only linear in the $(n_x)_x$ field, and we only expect inaccuracies for operators that are at least quadratic.
We conclude thus that the band is centered around the potential energy $E$. 
Second, we expect $\langle E | H_{kin}^2 | E \rangle$ to scale as $J^2 V$, hence a width scaling as $J V^{1/2}$. 
So, in particular, the eigenstate $|E \rangle$ is a superposition of classical states at the energy density $\varepsilon = E/V$, 
and it must display an energy condensate. 

Let us now move to the glassy behavior. 
Roughly speaking, a characteristic feature of a glass is the extremely long time needed to evolve towards the thermal equilibrium. 
Such a very slow behavior was observed numerically in \cite{iubini_et_al_2013,mithun_et_al_2018} for the DNLS chain. 
For the Bose-Hubbard Hamiltonian at $\varepsilon > \varepsilon_c (\rho)$, 
there is a natural way to prepare the system out of equilibrium so that the relaxation time diverges very fast as the volume increases.
Indeed, let us fix some length $\ell$ and let us divide the volume $\Lambda$ in cubes of linear size $\ell$. 
We imagine that $\ell$ is large but kept fixed as $V \to \infty$. 
As an initial state, we consider a classical state such that, in each cube, the density of particles is $\rho$ and the energy density is $\varepsilon$. 
This state is actually far from equilibrium, since the entropy density is strictly smaller than the maximal entropy density at infinite temperature. 
As we assume that a condensate features in all eigenstates at this energy density, the system needs to create one in order to reach thermal equilibrium. 
Let us analyze the time needed for this process to happen. 
It is essentially determined by the time required to add the ``last particle" on top of the condensate, 
i.e.\@ moving from about $(V \Delta \varepsilon )^{1/2} - 1$ to $(V \Delta \varepsilon)^{1/2}$ particles on a site. 
If we imagine that the extra particle is extracted from the usual infinite temperature states, 
this process results in an energy difference of order $\Delta E = 2 V^{1/2}\Delta \varepsilon$. 
Releasing such a quantity of energy occurs at a rate \cite{abanin_et_al_2015}
$$
	\tau^{-1} \; \sim  \; \ed^{- \Delta E / \varepsilon_c (\rho)}  \; \sim \; \ed^{- c V^{1/2}}. 
$$ 
Obviously, we could also imagine that such a slow process could be avoided 
by creating a bunch of smaller condensates nearby so that there would be no need to transfer at once such a big amount of energy. 
However, for thermodynamical reasons, it would then take an even longer time to reach such extremely rare configurations.
The time to reach equilibrium grows thus as a stretch exponential in the volume. 
 
Let us make two final remarks on the glassy dynamics. 
First, for many observables of interest, the thermalization time turns out to be much faster than predicted above. 
For example, in $d\ge 2$, we expect that the transport properties are not at all affected by the formation of a condensate, 
and that the conserved quantities evolve on diffusive time scales. 
Second, the slow behavior described above is only remotely connected to the drastic slowing down of conductivity in the Bose-Hubbard chain as $J$ approaches $0$, 
see \cite{bols_de_roeck_2018} as well as \cite{de_roeck_huveneers_2014,de_roeck_huveneers_2015} for similar results in related models.
Indeed, in the latter case, the phenomenon is dynamical rather than entropic, does not depend on the presence of two conserved quantities,
and affects directly the transport properties.  

\bigskip
\textbf{Mathematical methods.}
The main aim of this paper is to prove rigorous statements on the phase diagram at $J = 0$, with methods that mirror directly the understanding developed above. 
Equivalence of ensembles and condensation on a single site have been shown mathematically for the stationary measures of the zero-range process, 
see \cite{grosskinsky_et_al_2003,grosskinsky_2008,chelboun_grosskinsky_2014}. 
The proof of the equivalence of ensembles in \cite{grosskinsky_et_al_2003} is nice and rather straightforward,
but is based on a local central limit theorem as is actually the proof in \cite{chatterjee_2017}. 
We do not know whether such a result is valid for our set-up 
(the parity constraint in the definition \eqref{eq: microcanonical set} below indicates that the statement should be rather specific).
Instead, we exploit the huge degeneracy of the spectrum of $H_{pot}$ to construct explicitly a set of states in the micro-canonical ensemble with 
an entropy that approaches the canonical entropy of the Gibbs state in the thermodynamic limit,  
see Lemma~\ref{lem: discrete approximation Gibbs state} below. 

The way the condensation on a single site is derived in \cite{grosskinsky_et_al_2003} for the zero-range process does not carry over to our system
since the energy per site has still too light tails in the infinite temperature state. 
The system analyzed in \cite{chatterjee_2017}, for which condensation on a single site is shown,
differs from ours by the fact that the variables $n_x$ are continuous.
It is well conceivable that the essence of the arguments used there could be adapted to our set-up. 
Instead, to stick to the argumentation developed in this introduction, our proof proceeds through the introduction of an effective negative temperature state  
as mentioned above and as detailed in Section~\ref{sec: concentration on a single site} below. 
This yields an alternative approach to this problem. 


\section{Model and results}\label{sec: model and results}

Let $\N$ be the set of non-negative integers. 
Let $V \in \N \backslash \{ 0 \}$ be the volume and $\Lambda = \{ 0, \dots , V-1 \}$ be the set of sites. 
Let us denote the configuration space by $\Gamma_V = \N^V$. 
Given $\gamma = (\gamma(x))_{x \in \Lambda} \in \Gamma_V$, we associate the particle density and energy density as
$$
	\boldsymbol\rho (\gamma) \; = \;  \frac1V \sum_{x\in \Lambda} \gamma(x), 
	\qquad 
	\boldsymbol\varepsilon (\gamma)  \; = \; \frac1V \sum_{x\in \Lambda} (\gamma(x))^2.
$$

Let us next define the microcanonical and the grand-canonical ensembles. 
First, given a volume $V$, a density of particles $\rho>0$ and an energy density $\varepsilon>0$, 
let 
\begin{equation}\label{eq: microcanonical set}
	\Gamma_{\rho,\varepsilon,V} 
	\; = \;  
	\left\{ 
	\gamma \in \Gamma_V :\quad 
	\boldsymbol\rho(\gamma) = \frac{\lfloor \rho V \rfloor}{V} 
	\quad \text{and} \quad
	\boldsymbol\varepsilon(\gamma) = \frac{\lfloor \varepsilon V \rfloor + \wp(\rho,\varepsilon,V)}{V}	
	\right\}\, 
\end{equation}
with $\wp(\rho,\varepsilon,V) = 0$ is $\lfloor \rho V \rfloor$ and $\lfloor \varepsilon V \rfloor$ have the same parity, and $\wp(\rho,\varepsilon,V) = 1$ otherwise;
later on we will often just write $\wp$ for $\wp(\rho,\varepsilon,V)$ when no confusion seems possible. 
This distinction is necessary for the set $\Gamma_{\rho,\varepsilon,V}$ not to be empty, 
since $V \boldsymbol\rho(\gamma)$ and $V \boldsymbol\varepsilon(\gamma)$ have the same parity for any $\gamma \in \Gamma_V$.
If $\Gamma_{\rho,\varepsilon,V}$ is not empty, 
the micro-canonical measure $P_{\rho,\varepsilon,V}^{\mathrm{mic}}$ is defined as the uniform measure concentrated on $\Gamma_{\rho,\varepsilon,V}$. 
The corresponding average is denoted by $\langle \cdot \rangle_{\rho,\varepsilon,V}^{\mathrm{mic}}$.

Second, given a volume $V$, an inverse temperature $\beta > 0$ and some parameter $\nu\in \R$, or $\beta = 0$ and $\nu > 0$,
the grand-canonical measure $P_{\nu,\beta,V}^{\mathrm{grd}}$ is defined as the probability measure on $\Gamma_V$ with weight
$$
	P_{\nu,\beta,V}^{\mathrm{grd}} (\gamma) 
	\; = \;
	\frac1{Z(\nu,\beta,V)} 
	\ed^{- V( \beta \boldsymbol\varepsilon(\gamma) + \nu \boldsymbol\rho (\gamma))},
$$
where $Z(\nu,\beta,V)$ ensures the normalization (partition function). 
We denote the corresponding average by $\langle \cdot \rangle_{\nu,\beta,V}^{\mathrm{grd}}$.
The grand-canonical measure is a product state and we denote the marginal on a single site by $p_{\nu,\beta}$, i.e.\@ the measure on $\N$ with density
$$
	p_{\nu,\beta} (n) \; = \; \frac{1}{z(\nu,\beta)} \ed^{- \beta n^2 - \nu n}. 
$$
For an observable $\varphi$ that depends on a fixed finite number of coordinates of $\gamma$, we will also use the notation $\langle \varphi \rangle_{\nu,\beta}$
for $\langle \varphi \rangle_{\nu,\beta,V}^{\mathrm{grd}}$. 

Let us finally define the two sets
$$
	R_{\mathrm{pos}} \; = \; 
	\{ (\rho,\varepsilon) \in ]0,\infty[^2  \; : \; 
	\varepsilon_{\mathrm{gs}} (\rho) \; < \; \varepsilon \; < \; \varepsilon_{\mathrm{c}} (\rho)\} \, ,
	\qquad 
	R_{\mathrm{inf}} \; = \; 
	\{ (\rho,\varepsilon) \in ]0,\infty[^2 \; : \; \varepsilon \; \ge \;  \varepsilon_{\mathrm{c}} (\rho) \} 
$$
with $\varepsilon_\mathrm{gs}$ and $\varepsilon_{\mathrm{c}}$ being defined by \eqref{eq: ground state energy density} and \eqref{eq: critical density} respectively. 
For $(\rho, \varepsilon) \in R_{\mathrm{pos}}$, the corresponding Gibbs state has a positive temperature ($\beta > 0$), 
while for $(\rho, \varepsilon) \in R_{\mathrm{inf}}$, the Gibbs state has infinite temperature ($\beta = 0$),   
see figure \ref{fig: phase diagram}.

Our first result deals with the equivalence of ensembles. 
Given two probability measures $P_1$ and $P_2$ on $\Gamma_V$, such that $P_2$ does not vanish on $\Gamma_V$, we define their relative entropy by 
$$
	S(P_1,P_2) \; = \; \sum_{\gamma \in \Gamma_V} P_1(\gamma) \ln \frac{P_1 (\gamma)}{P_2(\gamma)} \; \ge \; 0
$$
with the convention $0 \ln 0 = 0$. 

\begin{Theorem}\label{th: equivalence of ensembles relative entropy}
	First, given $(\rho,\varepsilon) \in R_{\mathrm{pos}}$, there exists a unique $(\nu,\beta) \in \R \times ]0,+\infty[$ such that
	$$
		S( P_{\rho,\varepsilon,V}^{\mathrm{mic}} , P_{\nu,\beta,V}^{\mathrm{grd}} ) \; = \; \mathcal O ( ( \ln V )^3 )  \qquad \text{as} \qquad V \to \infty
	$$  
	and such that the relations 
	$\langle \gamma(0) \rangle_{\nu,\beta} = \rho$ 
	and 
	$\langle (\gamma (0))^2 \rangle_{\nu,\beta} = \varepsilon$ are satisfied.
	Second, given $(\rho,\varepsilon) \in R_{\mathrm{inf}}$, there exists a unique $\nu > 0$ such that
	$$
		S( P_{\rho,\varepsilon,V}^{\mathrm{mic}} , P_{\nu,0,V}^{\mathrm{grd}} ) \; = \; \mathcal O( V^{1/2} ) \qquad \text{as} \qquad V \to \infty
	$$  
	and such that the relation
	$\langle \gamma (0) \rangle_{\nu,0} = \rho$ is satisfied.
\end{Theorem}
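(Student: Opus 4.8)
\textbf{Proof strategy for Theorem~\ref{th: equivalence of ensembles relative entropy}.}

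The starting point is an exact identity. On $\Gamma_{\rho,\varepsilon,V}$ both $\boldsymbol\rho(\gamma)$ and $\boldsymbol\varepsilon(\gamma)$ take fixed values, so the grand-canonical weight is \emph{constant} there, say $=q$; since $P_{\rho,\varepsilon,V}^{\mathrm{mic}}$ is uniform on $\Gamma_{\rho,\varepsilon,V}$, this gives at once, writing $N=\lfloor\rho V\rfloor$ and $M=\lfloor\varepsilon V\rfloor+\wp$,
\begin{equation*}
	S\bigl(P_{\rho,\varepsilon,V}^{\mathrm{mic}},P_{\nu,\beta,V}^{\mathrm{grd}}\bigr)
	\;=\; -\ln P_{\nu,\beta,V}^{\mathrm{grd}}\bigl(\Gamma_{\rho,\varepsilon,V}\bigr)
	\;=\; -\ln|\Gamma_{\rho,\varepsilon,V}| + \ln Z(\nu,\beta,V) + \beta M + \nu N .
\end{equation*}
Now $\ln Z(\nu,\beta,V)=V\ln z(\nu,\beta)$, and once $\langle\gamma(0)\rangle_{\nu,\beta}=\rho$ and $\langle(\gamma(0))^2\rangle_{\nu,\beta}=\varepsilon$ hold, $\ln z(\nu,\beta)+\beta\varepsilon+\nu\rho$ equals exactly the single-site entropy $s(p_{\nu,\beta})=-\sum_n p_{\nu,\beta}(n)\ln p_{\nu,\beta}(n)$, the floor corrections contributing only $\mathcal O(1)$ for $\nu,\beta$ fixed, so
\begin{equation*}
	S\bigl(P_{\rho,\varepsilon,V}^{\mathrm{mic}},P_{\nu,\beta,V}^{\mathrm{grd}}\bigr) \;=\; V\,s(p_{\nu,\beta}) - \ln|\Gamma_{\rho,\varepsilon,V}| + \mathcal O(1),
\end{equation*}
and likewise in the second part with $\beta=0$ (there $\boldsymbol\varepsilon$ drops out of the weight and $\ln z(\nu,0)+\nu\rho=s(p_{\nu,0})$). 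As $S\ge0$, this already gives the upper bound $\ln|\Gamma_{\rho,\varepsilon,V}|\le V s(p_{\nu,\beta})+\mathcal O(1)$ for free, so the theorem reduces to (i)~exhibiting the parameters and (ii)~a matching \emph{lower} bound on $|\Gamma_{\rho,\varepsilon,V}|$, i.e.\@ on the number of microcanonical configurations.

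\textbf{Producing the parameters.} The map $(\nu,\beta)\mapsto(\langle\gamma(0)\rangle_{\nu,\beta},\langle(\gamma(0))^2\rangle_{\nu,\beta})$ is minus the gradient of the strictly convex function $\ln z(\nu,\beta)$ (a log-Laplace transform of $(n,n^2)$ against counting measure), hence injective, which yields uniqueness. For the image: for each fixed $\beta\ge0$, $\nu\mapsto\langle\gamma(0)\rangle_{\nu,\beta}$ is a continuous strictly decreasing bijection of $\R$ (resp.\@ of $(0,\infty)$ when $\beta=0$) onto $(0,\infty)$, so $\rho$ fixes $\nu$; along the resulting curve of constant $\rho$, $\varepsilon$ is strictly decreasing in $\beta$ — its $\beta$-derivative equals $-\bigl(\mathrm{Var}(n)\,\mathrm{Var}(n^2)-\mathrm{Cov}(n,n^2)^2\bigr)/\mathrm{Var}(n)<0$ by Cauchy--Schwarz (strict, since $n$ and $n^2$ are not affinely dependent) — with $\varepsilon\to\varepsilon_{\mathrm c}(\rho)=2\rho^2+\rho$ as $\beta\to0^+$ and $\varepsilon\to\varepsilon_{\mathrm{gs}}(\rho)$ as $\beta\to\infty$ (the single-site measure concentrating on $\{\lfloor\rho\rfloor,\lceil\rho\rceil\}$). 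Hence a unique $\beta>0$, and then $\nu$, correspond to any $(\rho,\varepsilon)\in R_{\mathrm{pos}}$; for the second part $\nu\mapsto\langle\gamma(0)\rangle_{\nu,0}$ is a bijection $(0,\infty)\to(0,\infty)$, fixing $\nu>0$ from $\rho$.

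\textbf{The lower bound on $|\Gamma_{\rho,\varepsilon,V}|$.} This is the core, and uses the degeneracy of $H_{pot}$: a configuration is determined, up to the $V!$ permutations of sites, by its occupation histogram $(m_n)_{n\ge0}$, so $|\Gamma_{\rho,\varepsilon,V}|\ge V!/\prod_n m_n^*!$ for any single histogram with $\sum_n m_n^*=V$, $\sum_n n m_n^*=N$, $\sum_n n^2 m_n^*=M$. For $(\rho,\varepsilon)\in R_{\mathrm{pos}}$, take $m_n^*=\lfloor V p_{\nu,\beta}(n)\rfloor$ truncated at a cutoff $\ncutoff$ of order $\sqrt{\ln V}$ (so that the discarded tail of $Vp_{\nu,\beta}$ in all three moments is $o(1)$), then correct the $\mathcal O(\mathrm{polylog}\,V)$ discrepancies in $(V,N,M)$ by elementary integer adjustments supported on $\mathcal O(1)$ occupation levels; the parity constraint poses no problem since $n^2\equiv n \bmod 2$. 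Stirling's formula, together with the identities $\sum_n\delta_n=0$, $\sum_n n\delta_n=\mathcal O(1)$, $\sum_n n^2\delta_n=\mathcal O(1)$ for $\delta_n=m_n^*-Vp_{\nu,\beta}(n)$ (which annihilate the linear term of the entropy expansion), yield $\ln|\Gamma_{\rho,\varepsilon,V}|\ge V s(p_{\nu,\beta})-\mathcal O((\ln V)^3)$, hence $S=\mathcal O((\ln V)^3)$. For $(\rho,\varepsilon)\in R_{\mathrm{inf}}$ the target is $s(p_{\nu,0})$ with $\nu=\nu(\rho)$, while now $M\approx V\varepsilon$ exceeds the infinite-temperature mean energy $V\varepsilon_{\mathrm c}(\rho)$ by the macroscopic amount $V\Delta\varepsilon$, $\Delta\varepsilon=\varepsilon-\varepsilon_{\mathrm c}(\rho)>0$; one exhibits configurations with one distinguished ``condensate'' site carrying $\approx\sqrt{V\Delta\varepsilon}$ particles (hence $\approx V\Delta\varepsilon$ of energy) and $V-1$ bulk sites drawn from the histogram closest to $(V-1)p_{\nu',0}$ at the slightly reduced density $\rho'=\rho-\Theta(V^{-1/2})$ that restores the particle count. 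Since the single-site entropy is smooth in the density, $(V-1)s(p_{\nu',0})=V s(p_{\nu,0})-\mathcal O(V^{1/2})$, and after $\mathcal O(\mathrm{polylog}\,V)$ fine-tuning to hit $(N,M)$ exactly one gets $\ln|\Gamma_{\rho,\varepsilon,V}|\ge V s(p_{\nu,0})-\mathcal O(V^{1/2})$, hence $S=\mathcal O(V^{1/2})$. These constructions are packaged as Lemma~\ref{lem: discrete approximation Gibbs state}; in particular they show $\Gamma_{\rho,\varepsilon,V}\neq\emptyset$ for $V$ large, so that $P_{\rho,\varepsilon,V}^{\mathrm{mic}}$ is well defined.

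\textbf{Main obstacle.} The upper bound and the convex-analytic parameter count are routine; the work is entirely in the lower bound, i.e.\@ in constructing explicitly — under the rigid exact-moment-and-parity constraints of \eqref{eq: microcanonical set} — a configuration whose entropy matches $V s(p_{\nu,\beta})$ up to the stated error. The most delicate point is the infinite-temperature regime, where one must verify that routing the \emph{entire} macroscopic excess energy $V\Delta\varepsilon$ through a single site costs only $\mathcal O(V^{1/2})$ of entropy — the loss stemming precisely from the $\Theta(\sqrt V)$ particles thereby withdrawn from the bulk — and where the bookkeeping of the repair terms is what produces the $(\ln V)^3$, respectively $V^{1/2}$, error.
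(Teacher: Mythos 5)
Your proposal is correct and takes essentially the same route as the paper: the exact identity $S = V s(p_{\nu,\beta}) - \ln|\Gamma_{\rho,\varepsilon,V}| + \mathcal O(1)$, the convex-analytic determination and uniqueness of $(\nu,\beta)$ (the paper's Lemma~\ref{lem: positive temperature phase}), and a lower bound on $|\Gamma_{\rho,\varepsilon,V}|$ obtained by explicitly constructing a histogram in $\mathcal P_{\rho,\varepsilon,V}$ close to the Gibbs marginal --- with one site carrying about $(V\Delta\varepsilon)^{1/2}$ particles in the infinite-temperature phase --- and counting configurations by permutations and Stirling (the paper's bound \eqref{eq: bound from Stirling} and Lemma~\ref{lem: discrete approximation Gibbs state}). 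The only quibble is that with your reduced bulk density $\rho'$ the residual energy mismatch is of order $V^{1/2}$ rather than polylogarithmic, but repairing it costs only $\mathcal O(V^{1/2})$ entropy, so your stated bounds stand (the paper instead keeps the bulk at density $\rho$ and absorbs all corrections into $\mathcal O(V^{-1/2})$ adjustments of the low occupation levels).
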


\begin{Remark}\label{rem: convergence of the microcanonical entropy}
As can be seen from the proof of Theorem~\ref{th: equivalence of ensembles relative entropy} 
(see (\ref{eq: proof th 1: first eq}) below),
the convergence $S( P_{\rho,\varepsilon,V}^{\mathrm{mic}} , P_{\nu,\beta,V}^{\mathrm{grd}} ) / V \to 0$ stated in Theorem~\ref{th: equivalence of ensembles relative entropy}
is equivalent to the expected scaling behavior of the microcanonical entropy:
$$
	\frac{1}{V} \ln |\Gamma_{\rho,\varepsilon,V}| \; \to \; s (p_{\nu,\beta}) 
	\qquad \text{as} \qquad 
	V \to \infty \, ,
$$ 
where $s$ is the usual entropy of a measure, see (\ref{eq: entropy}) below.
In particular, if $(\rho,\varepsilon) \in R_{\mathrm{inf}}$, 
Theorem~\ref{th: equivalence of ensembles relative entropy} 
shows that the microcanonical entropy density converges to $s(p_{\nu,0})$, 
and becomes thus independent of $\varepsilon$ in the limit $V \to \infty$, as long as $(\rho,\varepsilon) \in R_{\mathrm{inf}}$. 
\end{Remark}

As noticed for example in \cite{grosskinsky_et_al_2003}, the sub-linear growth of the relative entropy with $V$ implies that
the expectation of bounded local observables with respect to the micro-canonical measure converges to their grand-canonical expectation.
Let $\mathcal B(\Gamma_V)$ be the set of bounded functions on $\Gamma_V$. 
Moreover, by a slight abuse of notations, given $1 \le V_0 \le V$,
we consider $\mathcal B(\Gamma_{V_0})$ as the subset of functions in $\mathcal B (\Gamma_V)$ that depend only on $\gamma$ through the $V_0$ first variables.

\begin{Corollary}\label{co: equivalence of ensembles local observables}
	Let $(\rho,\varepsilon) \in R_{\mathrm{pos}} \cup R_{\mathrm{inf}}$ and let $(\nu,\beta)$ as given by Theorem \ref{th: equivalence of ensembles relative entropy}.
	Let $V_0 \ge 1$. Given $\varphi \in B (\Gamma_{V_0})$, it holds that 
	$$
		\lim_{V \to \infty} \langle \varphi \rangle_{\rho,\varepsilon,V}^{\mathrm{mic}} 
		\; = \;  
		\langle \varphi \rangle_{\nu,\beta} .
	$$
\end{Corollary}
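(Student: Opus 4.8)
The plan is to deduce the statement from the convergence to zero of the relative entropy restricted to finitely many sites, and then to control the observable by a total variation estimate via Pinsker's inequality — this is the standard route, already alluded to above and in \cite{grosskinsky_et_al_2003}. The point to be careful about is that Pinsker applied directly to $S(P^{\mathrm{mic}}_{\rho,\varepsilon,V},P^{\mathrm{grd}}_{\nu,\beta,V})$ on the full configuration space $\Gamma_V$ is useless: Theorem~\ref{th: equivalence of ensembles relative entropy} only gives that this quantity is $o(V)$, so $(S/2)^{1/2}$ need not go to zero. Instead I would first pass to the marginal on the $V_0$ sites on which $\varphi$ depends, exploiting the product structure of the grand-canonical measure together with the permutation invariance of the microcanonical measure.

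Concretely, recall the elementary superadditivity of relative entropy: for a product reference measure $\mu=\bigotimes_{j}\mu_j$ and any measure $\nu$ on the corresponding product space with block-marginals $\nu_j$, one has $S(\nu,\mu)\ge\sum_j S(\nu_j,\mu_j)$ (indeed $S(\nu,\mu)=S(\nu,\bigotimes_j\nu_j)+\sum_j S(\nu_j,\mu_j)$ and the first term is nonnegative). I would apply this after partitioning $\Lambda=\{0,\dots,V-1\}$ into $m=\lfloor V/V_0\rfloor$ consecutive blocks of size $V_0$ plus one leftover block of size $<V_0$. Since $P^{\mathrm{grd}}_{\nu,\beta,V}$ is a product over sites it is a product over these blocks, each size-$V_0$ block carrying the marginal $p_{\nu,\beta}^{\otimes V_0}$; since $\Gamma_{\rho,\varepsilon,V}$ is defined by permutation-symmetric constraints, $P^{\mathrm{mic}}_{\rho,\varepsilon,V}$ is permutation invariant, so every size-$V_0$ block carries one and the same marginal, call it $q_{V,V_0}$. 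The superadditivity bound then gives
$$
S\big(P^{\mathrm{mic}}_{\rho,\varepsilon,V},P^{\mathrm{grd}}_{\nu,\beta,V}\big)\;\ge\;\Big\lfloor\frac{V}{V_0}\Big\rfloor\,S\big(q_{V,V_0},\,p_{\nu,\beta}^{\otimes V_0}\big),
$$
the leftover block contributing only a nonnegative term that is discarded. Hence, for fixed $V_0$, Theorem~\ref{th: equivalence of ensembles relative entropy} yields $S(q_{V,V_0},p_{\nu,\beta}^{\otimes V_0})\le\lfloor V/V_0\rfloor^{-1}\,\mathcal O((\ln V)^3)\to0$ when $(\rho,\varepsilon)\in R_{\mathrm{pos}}$, and $\le\lfloor V/V_0\rfloor^{-1}\,\mathcal O(V^{1/2})\to0$ when $(\rho,\varepsilon)\in R_{\mathrm{inf}}$.

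To conclude, I would invoke Pinsker's inequality, $\|q_{V,V_0}-p_{\nu,\beta}^{\otimes V_0}\|_{\mathrm{TV}}\le\big(S(q_{V,V_0},p_{\nu,\beta}^{\otimes V_0})/2\big)^{1/2}\to0$ (the relative entropies are finite since $p_{\nu,\beta}$ has full support on $\N$), and then for any bounded $\varphi$ on $\Gamma_{V_0}$,
$$
\big|\langle\varphi\rangle^{\mathrm{mic}}_{\rho,\varepsilon,V}-\langle\varphi\rangle_{\nu,\beta}\big|
=\Big|\textstyle\int\varphi\,\dd q_{V,V_0}-\int\varphi\,\dd p_{\nu,\beta}^{\otimes V_0}\Big|
\le2\,\|\varphi\|_\infty\,\|q_{V,V_0}-p_{\nu,\beta}^{\otimes V_0}\|_{\mathrm{TV}}\;\longrightarrow\;0 ,
$$
since $\varphi$ depends only on the first $V_0$ coordinates. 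I do not expect any serious obstacle; the only points deserving a line of justification are that $\Gamma_{\rho,\varepsilon,V}\neq\emptyset$ for all large $V$ (so that $P^{\mathrm{mic}}_{\rho,\varepsilon,V}$ is defined — this is already implicit in Theorem~\ref{th: equivalence of ensembles relative entropy}), that the relative entropies involved are finite, and the minor bookkeeping of the block decomposition. If one prefers to avoid even Pinsker, one can instead argue directly from $S\to0$ that the densities $\dd q_{V,V_0}/\dd p_{\nu,\beta}^{\otimes V_0}$ tend to $1$ in $L^1(p_{\nu,\beta}^{\otimes V_0})$, which again gives the claim for bounded $\varphi$.
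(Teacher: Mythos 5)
Your proposal is correct and follows essentially the same route as the paper, which also (citing \cite{grosskinsky_et_al_2003}) bounds the relative entropy of the $V_0$-site marginals by a fraction of order $V_0/V$ of the full relative entropy from Theorem~\ref{th: equivalence of ensembles relative entropy} and then deduces weak convergence of the marginals. Your write-up merely makes explicit the two steps the paper leaves implicit — the block-superadditivity argument behind the marginal entropy bound (where the paper's factor $\lfloor V_0/V\rfloor$ is evidently a typo for $\lfloor V/V_0\rfloor^{-1}$) and the passage from entropy to expectations via Pinsker.
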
 

Let $\partial R_{\mathrm{inf}} = \{ (\rho,\varepsilon) \in ]0,+\infty[^2 : \varepsilon = \varepsilon_{\mathrm c}(\rho)\}$ 
and let us now assume that $(\rho,\varepsilon) \in R_{\mathrm{inf}} \backslash \partial R_{\mathrm{inf}}$. 
Let us define the excess of energy 
\begin{equation}\label{eq: energy excess}
	\Delta \varepsilon 
	\; = \;
	\varepsilon - \varepsilon_{\mathrm c} (\rho)
	\; = \;
	\varepsilon - \langle (\gamma(0))^2 \rangle_{\nu,0} \; > \;  0
\end{equation}
where $\nu > 0$ is the unique parameter such that $\rho  = \langle \gamma (0) \rangle_{\nu,0}$.  
Our second result asserts that, under the microcanonical measure, there is a single site containing a macroscopic quantity of energy, 
and that this quantity is equal to $\Delta \varepsilon$ up to sub-extensive corrections. 
The formulation is directly inspired by the results in \cite{chatterjee_2017}. 
Given $\gamma \in \Gamma_V$, let $\gamma_{max} = \max_{x\in \Lambda} \{ \gamma (x)\}$ and 
$\gamma_{max}'$ be the second largest coordinate of $\gamma$, 
i.e.\@ $\gamma_{max}' = \max_{x \in \Lambda\backslash x_0} \{\gamma (x)\}$ where $x_0$ is such that $\gamma(x_0) = \gamma_{max}$. 

\begin{Theorem}\label{th:single site concentration}
If $(\rho,\varepsilon) \in  R_{\mathrm{inf}}\backslash \partial R_{\mathrm{inf}}$, then
$$
	\frac{\gamma_{max}^2}{V} \quad \to \quad \Delta \varepsilon \qquad\text{and}\qquad \frac{(\gamma_{max}')^2}{V} \quad \to \quad 0
$$
as $V\to \infty$, in probability with respect to $P_{\rho,\varepsilon,V}^{\mathrm{mic}}$. 
\end{Theorem}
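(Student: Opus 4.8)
\medskip
\noindent\textbf{Proof strategy.}
The plan is to derive Theorem~\ref{th:single site concentration} from sharp, order $V^{1/2}$ asymptotics of $\ln|\Gamma_{\rho,\varepsilon,V}|$, obtained by tuning a negative-temperature Gibbs measure. Write $\nu>0$ for the unique parameter with $\langle\gamma(0)\rangle_{\nu,0}=\rho$, and more generally $\nu(\rho')$ for the analogous parameter at density $\rho'$ (so $\nu(\rho)=\nu$); put $s_\infty=s(p_{\nu,0})$, $N=\lfloor\rho V\rfloor$, $E=\lfloor\varepsilon V\rfloor+\wp$, and recall $\varepsilon_{\mathrm c}(\rho)=\langle(\gamma(0))^2\rangle_{\nu,0}$, so $\Delta\varepsilon=\varepsilon-\varepsilon_{\mathrm c}(\rho)>0$. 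For a cutoff $M\ge1$ let $A_M=\{\gamma\in\Gamma_{\rho,\varepsilon,V}:\gamma_{max}\le M\}$ and $z_M(\nu',\beta')=\sum_{n=0}^M\ed^{-\beta' n^2-\nu' n}$; since every $\gamma\in A_M$ carries total energy $E$ and total number $N$, restricting the product weight $\prod_x\ed^{-\beta'\gamma(x)^2-\nu'\gamma(x)}$ to $A_M$ gives $\ln|A_M|\le V\ln z_M(\nu',\beta')+\beta' E+\nu' N$ for every $\nu'\in\R$ and every $\beta'\in\R$, and it is the choice $\beta'<0$ --- harmless because occupations are capped --- that carries the day. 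I also record two elementary facts: on $R_{\mathrm{pos}}$ one has $\partial s_{\mathrm{mic}}/\partial\varepsilon=\beta>0$ with $\beta$ vanishing linearly on $\partial R_{\mathrm{inf}}$, so $s_{\mathrm{mic}}(\rho',\varepsilon')\le s(p_{\nu(\rho'),0})-\Const\,(\varepsilon_{\mathrm c}(\rho')-\varepsilon')^2$ whenever $\varepsilon'<\varepsilon_{\mathrm c}(\rho')$ (uniformly over compacta); and $\tfrac{\dd}{\dd\rho'}s(p_{\nu(\rho'),0})=\nu(\rho')>0$. As a first step I would show $\gamma_{max}$ cannot be much larger than $\sqrt{V\Delta\varepsilon}$: for $m$ with $m^2>V(\Delta\varepsilon+\epsilon)$, conditioning one site to equal $m$ leaves the other $V-1$ sites in a microcanonical ensemble of particle density $\to\rho$ and energy density below $\varepsilon_{\mathrm c}(\rho)-\epsilon$, hence for $V$ large in $R_{\mathrm{pos}}$ at distance $\ge\epsilon/2$ from $\partial R_{\mathrm{inf}}$; by nonnegativity of the relative entropy (Remark~\ref{rem: convergence of the microcanonical entropy}) and the quadratic gap it has at most $\ed^{(V-1)(s_\infty-\Const\epsilon^2)}$ elements, so summing over the $V$ sites and the $O(V^{1/2})$ admissible $m$ and dividing by $|\Gamma_{\rho,\varepsilon,V}|\ge\ed^{Vs_\infty-O(V^{1/2})}$ (already contained in Theorem~\ref{th: equivalence of ensembles relative entropy}) yields $P_{\rho,\varepsilon,V}^{\mathrm{mic}}(\gamma_{max}^2>V(\Delta\varepsilon+\epsilon))\to0$.

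Next I would establish $\ln|\Gamma_{\rho,\varepsilon,V}|=Vs_\infty-\nu\sqrt{\Delta\varepsilon V}+o(\sqrt V)$. For the lower bound, single out one site holding $m=\lceil\sqrt{V\Delta\varepsilon}\,\rceil+\lceil\ln V\rceil$ particles; the remaining $V-1$ sites then form a microcanonical ensemble of particle density $\rho-(1+o(1))\sqrt{\Delta\varepsilon/V}$ at distance $(2+o(1))\sqrt{\Delta\varepsilon}\,(\ln V)/\sqrt V$ below $\partial R_{\mathrm{inf}}$, hence in $R_{\mathrm{pos}}$, and by the poly-logarithmic bound of Theorem~\ref{th: equivalence of ensembles relative entropy} on $R_{\mathrm{pos}}$ together with the two facts above it has $\ed^{Vs_\infty-(1+o(1))\nu\sqrt{\Delta\varepsilon V}-O((\ln V)^3)}$ elements; multiplying by the $V$ choices of the distinguished site gives the bound. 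For the upper bound, write $\Gamma_{\rho,\varepsilon,V}=A_{M_0}\sqcup\{\gamma_{max}>M_0\}$ with $M_0=\lceil\sqrt{V(\Delta\varepsilon+\epsilon)}\,\rceil$; by Step~1 the second piece is negligible, and in $\ln|A_{M_0}|\le V\ln z_{M_0}(\nu,-\const/M_0)-(\const/M_0)E+\nu N$ (for any $\const\in(0,\nu)$) one uses the expansion $\ln z_{M_0}(\nu,-\const/M_0)=\ln z(\nu,0)+(\const/M_0)\varepsilon_{\mathrm c}(\rho)+o(1/M_0)$, whose error is controlled because $\ed^{(\const/M_0)n^2-\nu n}\le\ed^{(\const-\nu)n}$ for $0\le n\le M_0$; this gives $\ln|\Gamma_{\rho,\varepsilon,V}|\le Vs_\infty-(\const\Delta\varepsilon/\sqrt{\Delta\varepsilon+\epsilon})\sqrt V+o(\sqrt V)$, and sending $\epsilon\downarrow0$, $\const\uparrow\nu$ matches the lower bound.

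With these asymptotics the condensate appears: for fixed $\alpha<1$ and $M=\lceil\alpha\sqrt{V\Delta\varepsilon}\,\rceil$ the same computation gives $\ln|A_M|\le Vs_\infty-(\const\sqrt{\Delta\varepsilon}/\alpha)\sqrt V+o(\sqrt V)$ for every $\const\in(0,\nu)$, so choosing $\const\in(\alpha\nu,\nu)$ and dividing by $|\Gamma_{\rho,\varepsilon,V}|$ from the previous paragraph gives $P_{\rho,\varepsilon,V}^{\mathrm{mic}}(\gamma_{max}\le M)\le\ed^{-(\const/\alpha-\nu)\sqrt{\Delta\varepsilon V}+o(\sqrt V)}\to0$; together with Step~1 this yields $\gamma_{max}^2/V\to\Delta\varepsilon$ in probability. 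To see $(\gamma_{max}')^2/V\to0$, fix $\eta>0$ and work on the event $\gamma_{max}^2\in[(\Delta\varepsilon-\eta/2)V,(\Delta\varepsilon+\eta/2)V]$, which by the above has probability $\to1$. On this event the further event $\{(\gamma_{max}')^2\ge\eta V\}$ forces two sites with occupations $m_0\ge(1-O(\eta))\sqrt{\Delta\varepsilon V}$ and $m_1\ge\sqrt{\eta V}$; deleting both leaves $V-2$ sites of particle density $\to\rho$ and energy density $\le\varepsilon_{\mathrm c}(\rho)-\eta/2+o(1)$, hence --- in contrast to a single condensate, whose complement sits exactly on $\partial R_{\mathrm{inf}}$ --- inside $R_{\mathrm{pos}}$ at distance $\ge\eta/4$ from $\partial R_{\mathrm{inf}}$, so by the quadratic gap the number of such configurations is at most $\ed^{Vs_\infty-(1-O(\eta))\nu\sqrt{\Delta\varepsilon V}-\Const\eta^2 V+o(V)}$ (absorbing the $O(V^2)$ positions and the $O(V)$ pairs $(m_0,m_1)$ into $\ed^{o(V)}$); dividing by $|\Gamma_{\rho,\varepsilon,V}|=\ed^{Vs_\infty-(1+o(1))\nu\sqrt{\Delta\varepsilon V}}$ leaves $\ed^{O(\eta)\sqrt V-\Const\eta^2 V+o(V)}\to0$ for each fixed $\eta>0$.

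The hard part is the $V^{1/2}$-precision required in the comparison of $\ln|A_M|$ with $\ln|\Gamma_{\rho,\varepsilon,V}|$ in the third paragraph: the entropic cost of capping all occupations at $\alpha\sqrt{V\Delta\varepsilon}$ is itself only of order $V^{1/2}$, so Theorem~\ref{th: equivalence of ensembles relative entropy} cannot be invoked as a black box --- one must pin down the exact constant $\nu\sqrt{\Delta\varepsilon}$ of the $V^{1/2}$ correction to $\ln|\Gamma_{\rho,\varepsilon,V}|$. This is what forces the careful tuning $\beta=-\const/M$ with $\const\uparrow\nu$ of the negative-temperature Gibbs measure on $A_M$, and it is why the lower bound on $\ln|\Gamma_{\rho,\varepsilon,V}|$ has to be routed through the much sharper $R_{\mathrm{pos}}$ estimate of Theorem~\ref{th: equivalence of ensembles relative entropy} (the $R_{\mathrm{inf}}$ estimate, being merely $O(V^{1/2})$, is too crude). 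The only other genuinely delicate point, ruling out a second macroscopic site, becomes manageable precisely because it is attacked after $\gamma_{max}^2/V\to\Delta\varepsilon$ is already known, so that excising both large sites pushes the remaining bulk strictly into the positive-temperature region, turning an $O(V^{1/2})$ entropy deficit into a genuinely extensive one of order $\eta^2V$.
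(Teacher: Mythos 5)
Your overall architecture is sound and is a genuine cousin of the paper's proof rather than a copy of it. The paper never computes the coefficient of the $V^{1/2}$ correction to $\ln|\Gamma_{\rho,\varepsilon,V}|$: it works at an intermediate scale $V^{\alpha}$, $1/3<\alpha<1/2$, comparing the entropy of the matched \emph{cutoff} Gibbs state $p^I_{\nu,\beta}$ (Lemma~\ref{lem: entropy gibbs on truncated space}, which shows it equals $s(p_{\nu_I,0})$ with $\rho_I=\rho-\Delta\varepsilon/I$ up to $\mathcal O(I^{-2}\ln I)$) against an explicit microcanonical construction at the reduced density $\rho_V=\rho-\lfloor(V\Delta\varepsilon)^{1/2}\rfloor/V$ (Lemma~\ref{lem: high entropy state}), and it controls the number of occupation profiles by Lemma~\ref{lem: number of states}; the $V^{1/2}$-order terms then cancel without ever being named. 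You instead extract the sharp constant $\nu\sqrt{\Delta\varepsilon}$ explicitly, bounding $|A_M|$ by a direct exponential tilting with $\beta'=-\const/M$, $\const<\nu$ (your inequality $\ln|A_M|\le V\ln z_M(\nu',\beta')+\beta'E+\nu'N$ and the expansion $\ln z_M(\nu,-\const/M)=\ln z(\nu,0)+(\const/M)\varepsilon_{\mathrm c}(\rho)+\mathcal O(M^{-2})$ are correct, and this nicely replaces both the profile-counting lemma and the asymptotic solution of the matched $(\nu,\beta)$). Your treatment of $\gamma_{max}$ not too large and of $\gamma'_{max}$ (excising the two large sites so the bulk lands strictly inside $R_{\mathrm{pos}}$, turning the deficit extensive) is also fine in outline and close in spirit to the paper's events $B$ and $D$.

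There are two places where the argument as written is not self-supporting. First, and most seriously, your lower bound $\ln|\Gamma_{\rho,\varepsilon,V}|\ge Vs_\infty-(1+o(1))\nu\sqrt{\Delta\varepsilon V}$ is obtained by planting a site with $m\approx\sqrt{V\Delta\varepsilon}+\ln V$ particles and then invoking ``the poly-logarithmic bound of Theorem~\ref{th: equivalence of ensembles relative entropy} on $R_{\mathrm{pos}}$'' for the remaining $V-1$ sites. But the residual parameters $(\rho'_V,\varepsilon'_V)$ depend on $V$ and converge to the boundary point $(\rho,\varepsilon_{\mathrm c}(\rho))\in\partial R_{\mathrm{inf}}$, whereas Theorem~\ref{th: equivalence of ensembles relative entropy} is stated for a \emph{fixed} point of $R_{\mathrm{pos}}$ with unquantified dependence of the error on that point; you would need a uniform version of the construction behind it near the boundary (plausible, since the limiting marginal $p_{\nu,0}$ is perfectly regular, but it must be proved), or better, a direct construction of a suitable occupation profile at reduced bulk density --- which is exactly what the paper's Lemma~\ref{lem: high entropy state} supplies, with error $\mathcal O(V^{1/4})$ in the exponent. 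Second, your two ``elementary facts'' --- the two-sided quadratic behaviour $s_{\mathrm{mic}}(\rho',\varepsilon')\approx s(p_{\nu(\rho'),0})-\Const(\varepsilon_{\mathrm c}(\rho')-\varepsilon')^2$ near the critical line (you need the upper bound on the gap in the lower-bound step, where the deficit is $\mathcal O(\ln V/\sqrt V)$, and the lower bound on the gap in Steps 1 and 4), and $\tfrac{\dd}{\dd\rho'}s(p_{\nu(\rho'),0})=\nu(\rho')$ --- are true and provable by calculus on $\ln z$, but they are asserted, not proved, and the first is nowhere in the paper; a complete write-up must include them (the paper only uses, and proves implicitly, the linear monotonicity in the density). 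With these two points repaired, your proof goes through and constitutes a legitimate alternative execution of the same negative-temperature idea.
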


\section{Proof of Theorem \ref{th: equivalence of ensembles relative entropy} and Corollary \ref{co: equivalence of ensembles local observables}}
\label{sec: equivalence of ensembles}

Let us first describe the full phase diagram. Let $(\rho,\varepsilon) \in ]0,+\infty[^2$. 

\begin{Lemma}\label{lem: positive temperature phase}
$(\rho,\varepsilon) \in R_{\mathrm{pos}}$ if and only if there exists a unique $\nu \in \R$ and $\beta > 0$ so that 
\begin{equation}\label{eq: two expectations}
	\rho \; = \;  \langle \gamma (0) \rangle_{\nu,\beta}
	\qquad \text{and} \qquad 
	\varepsilon \; = \;  \langle (\gamma(0))^2 \rangle_{\nu,\beta}.
\end{equation}
\end{Lemma}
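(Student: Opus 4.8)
The plan is to analyse the single–site moment map $\Phi(\nu,\beta):=\bigl(\langle\gamma(0)\rangle_{\nu,\beta},\ \langle(\gamma(0))^2\rangle_{\nu,\beta}\bigr)=:(m_1,m_2)$ on $\R\times]0,+\infty[$ and to show that it is a bijection onto $R_{\mathrm{pos}}$; uniqueness in \eqref{eq: two expectations} will come out of the strict monotonicities used along the way. Write $\psi(\nu,\beta)=\ln z(\nu,\beta)$. For $\beta>0$ the series defining $z$ converges for every $\nu\in\R$ (Gaussian decay), $\psi$ is real-analytic, and differentiation under the sum is legitimate, giving $m_1=-\partial_\nu\psi$, $m_2=-\partial_\beta\psi$, while the Hessian of $\psi$ equals the covariance matrix of the pair $(n,n^2)$ under $p_{\nu,\beta}$. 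Since $p_{\nu,\beta}$ has full support on $\N$ when $\beta>0$, the random variables $n$ and $n^2$ are not affinely related, so this covariance matrix is strictly positive definite; in particular $\partial_\nu m_1=-\mathrm{Var}_{\nu,\beta}(n)<0$ and $\partial_\beta m_1=-\mathrm{Cov}_{\nu,\beta}(n,n^2)<0$ throughout.

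First I reduce to one equation: for fixed $\beta>0$, $\nu\mapsto m_1(\nu,\beta)$ is strictly decreasing, tends to $0$ as $\nu\to+\infty$ (the mass concentrates at $n=0$) and to $+\infty$ as $\nu\to-\infty$ (it concentrates around $n\sim-\nu/2\beta$), hence is a homeomorphism $\R\to]0,+\infty[$; let $\nu(\rho,\beta)$ denote the unique preimage of $\rho$. By the implicit function theorem $\beta\mapsto\nu(\rho,\beta)$ is smooth on $]0,+\infty[$, so $g(\rho,\beta):=m_2(\nu(\rho,\beta),\beta)$ is well defined and smooth. Differentiating the constraint $m_1(\nu(\rho,\beta),\beta)=\rho$ to get $\nu'(\beta)=-\mathrm{Cov}(n,n^2)/\mathrm{Var}(n)$ and substituting into $dg/d\beta$, the (co)variance identities above give
\[
\frac{\partial g}{\partial\beta}(\rho,\beta)\;=\;-\,\frac{\mathrm{Var}(n)\,\mathrm{Var}(n^2)-\mathrm{Cov}(n,n^2)^2}{\mathrm{Var}(n)}\;<\;0
\]
by the Cauchy--Schwarz inequality, strictly since $n$ and $n^2$ are not affinely related. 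Thus $g(\rho,\cdot)$ is strictly decreasing.

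Next come the two endpoints. As $\beta\to0^+$: near $(\nu_0(\rho),0)$, where $\nu_0(\rho)>0$ is the parameter of the geometric law $p_{\nu_0(\rho),0}$ with mean $\rho$, the functions $z$ and its $\nu$-derivatives are jointly continuous (dominated by $\ed^{-(\nu_0(\rho)/2)n}$) and $\partial_\nu m_1(\nu_0(\rho),0)=-\mathrm{Var}_{\nu_0(\rho),0}(n)\neq0$, so the implicit function theorem yields a continuous branch $\beta\mapsto\nu(\beta)$ with $\nu(0)=\nu_0(\rho)$ solving $m_1=\rho$; by uniqueness this is $\nu(\rho,\beta)$, whence $g(\rho,\beta)\to m_2(\nu_0(\rho),0)=2\rho^2+\rho=\varepsilon_{\mathrm c}(\rho)$, the last equality being the direct computation of the first two moments of the geometric law, i.e.\ \eqref{eq: critical density}. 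As $\beta\to\infty$: the measure $p_{\nu(\rho,\beta),\beta}$ concentrates on the fluctuation-minimising configuration, namely on $\{\lfloor\rho\rfloor,\lfloor\rho\rfloor+1\}$ with weights tending to $1-(\rho-\lfloor\rho\rfloor)$ and $\rho-\lfloor\rho\rfloor$ respectively; consequently $g(\rho,\beta)\to\varepsilon_{\mathrm{gs}}(\rho)$ as in \eqref{eq: ground state energy density}. Putting this together: for each $\rho>0$, $g(\rho,\cdot)$ is a continuous strictly decreasing bijection from $]0,+\infty[$ onto $]\varepsilon_{\mathrm{gs}}(\rho),\varepsilon_{\mathrm c}(\rho)[$, so \eqref{eq: two expectations} has a (necessarily unique) solution with $\beta>0$ exactly when $\varepsilon\in\,]\varepsilon_{\mathrm{gs}}(\rho),\varepsilon_{\mathrm c}(\rho)[$, i.e.\ iff $(\rho,\varepsilon)\in R_{\mathrm{pos}}$, with $\beta$ fixed by $g(\rho,\beta)=\varepsilon$ and then $\nu=\nu(\rho,\beta)$.

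I expect the $\beta\to\infty$ endpoint to be the only genuinely delicate step: one has to identify the scaling $\nu(\rho,\beta)=-(2\lfloor\rho\rfloor+1)\beta+\mathcal O(1)$ forced by the constraint $m_1=\rho$, and then check that precisely the two atoms $\lfloor\rho\rfloor$ and $\lfloor\rho\rfloor+1$ survive (all others being suppressed at rate of order $\beta$) with the stated limiting weights, so that $m_2$ converges to $\varepsilon_{\mathrm{gs}}(\rho)$. Everything else is the standard convexity/monotonicity calculus of a one-parameter exponential family.
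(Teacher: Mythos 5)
Your proposal is correct and follows essentially the same route as the paper: you study the moment map of the one-site exponential family, use the covariance/Hessian positivity for strict monotonicity, parametrize the isothermal curves $\beta\mapsto g(\rho,\beta)$ (the paper's $f(\rho,\beta)$), and identify the two limits $\beta\to0^+$ and $\beta\to\infty$ with $\varepsilon_{\mathrm c}(\rho)$ and $\varepsilon_{\mathrm{gs}}(\rho)$, the only cosmetic differences being that you fix the sign of $\partial_\beta g$ by an explicit Cauchy--Schwarz computation where the paper infers monotonicity from injectivity of $\nabla g$, and that the paper treats the case $\rho\in\N$ at $\beta\to\infty$ separately via monotonicity in $\rho$, a point your two-atom concentration statement covers only as a degenerate case.
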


\begin{proof}
Let
$$
	g :\; \R \times ]0,+\infty[ \;\to\; \R,\; (\nu,\beta) \;\mapsto\; -\ln z(\nu,\beta) \; = \; -\ln \sum_{n\ge 0} \ed^{- \beta n^2 - \nu n} .
$$
For any $(\nu,\beta) \in \R \times ]0,+\infty[$,
$$
	\nabla g (\nu,\beta) \; = \; (\langle \gamma (0) \rangle_{\nu,\beta}, \langle (\gamma(0))^2 \rangle_{\nu,\beta}) \, ,
$$ 
and we thus need to show that $\nabla g$ is a bijection from $\R \times ]0,+\infty[$ to $R_{\mathrm{pos}}$. 
Since the Hessian of $g$ equals the opposite of the covariance matrix of $\gamma(0)$ and $(\gamma(0))^2$, 
the function $g$ is strictly concave, and $\nabla g$ is thus injective. 

Let us show that each iso-thermal line of $\nabla g$, i.e.\@ the set of values of $\nabla g$ obtained by varying $\nu$ at a fixed $\beta$, 
is the graph of a function in the $(\rho,\varepsilon)$-plane with domain $]0,+\infty[$.
This follows from the fact that $\partial_\nu (\partial_\nu g) (\nu , \beta) < 0$ for all $(\nu,\beta) \in \R \times ]0,+\infty[$ and that
$$
	\partial_\nu g (\nu,\beta) \to 0 \quad\text{as}\quad \nu \to + \infty
	\qquad\text{and}\quad
	\partial_\nu g (\nu, \beta) \to + \infty \quad\text{as}\quad \nu \to - \infty
$$	
for any given $\beta > 0$. 
Indeed, this allows to define the function $\boldsymbol \nu$ on $]0,+\infty[^2$ such that 
$$\partial_\nu g (\boldsymbol \nu (\rho,\beta),\beta) = \rho$$ for all $(\rho,\beta) \in ]0,+\infty[^2$. 
The function $f(\cdot, \beta)$ describing the isothermal line at a given $\beta$ is then 
$$
	f : ]0,+\infty[^2 \; \to  \; \R , \; (\rho,\beta) \; \to \; \partial_\beta g (\boldsymbol\nu(\rho,\beta),\beta).
$$
We define also the functions $f(\cdot, 0)$ and $f (\cdot, \infty)$, describing respectively the infinite temperature line and the ground state line, by 
$$
	f(\rho, 0) \; = \; 2 \rho^2 + \rho, 
	\qquad
	f(\rho,\infty) \; = \;  (1-(\rho - \lfloor \rho \rfloor)) \lfloor \rho \rfloor^2 + (\rho - \lfloor \rho \rfloor) (\lfloor \rho \rfloor + 1)^2
$$
for $\rho \in ]0,+\infty[$.
We notice also that the functions $\rho \mapsto f(\rho, \beta)$ are strictly increasing for all $\beta \in ]0,+\infty[$, 
as follows from the fact that 
$$
	\partial_\nu (\partial_\beta g) (\nu,\beta)
	\; = \; 
	- \langle X^3 \rangle + \langle X^2 \rangle \langle X \rangle
	\; = \; 
	- \big\langle (X + \langle X \rangle ) (X - \langle X \rangle )^2\big\rangle 
	\; < \; 0
$$
with $X = \gamma(0) \ge 0$ and $\langle \cdot \rangle = \langle \cdot \rangle_{\nu,\beta}$. 

To conclude, it is now enough to show that for any $\rho \in ]0,+\infty[$,
$$
	f(\rho, \beta) \; \to \;  f (\rho,0) \quad\text{as}\quad \beta \; \to \; 0
	\qquad \text{and} \qquad
	f(\rho, \beta) \; \to \;  f (\rho,\infty) \quad\text{as}\quad \beta \; \to \; \infty \, .
$$
Indeed, for any $\rho \in ]0,+\infty[$, the map $\beta \mapsto f(\rho,\beta)$ is continuous and monotonic, since $\nabla g$ is injective.
Hence every point $(\rho,\varepsilon)$ with $\varepsilon \in ]f(\rho,\infty),f(\rho,0)[$ will be in the image of $\nabla g$, 
while no point of the form $(\rho,\varepsilon)$ with $\varepsilon \notin ]f(\rho,\infty),f(\rho,0)[$ will be. 

Let us first consider the limit $\beta \to 0$.  
Let us fix $\rho>0$, and let $\nu>0$ be such that $\langle \gamma (0) \rangle_{\nu,0} = \rho$.
One computes that $|\partial_\nu g (\nu,\beta) - \rho| = \mathcal O(\beta)$ as $\beta \to 0$ and,
since the derivative $\partial_\nu (\partial_\nu g)$ does not vanish in the neighborhood of $\nu$ as $\beta \to 0$, 
there exists $\nu'(\beta)$ such that $|\nu - \nu'(\beta)| = \mathcal O (\beta)$ and $\partial_\nu g(\nu'(\beta),\beta) = \rho$, for any $\beta$ in a neighborhood of $0$.
Finally, one computes that $|\partial_\beta g (\nu',\beta) - \langle (\gamma(0))^2 \rangle_{\nu,0}|  = \mathcal O (\beta)$, which is the claim. 

Let us next consider the limit $\beta \to \infty$. 
Let us fix $\rho > 0$ and let us first assume $\rho \notin \N$.
Let us write 
$$
	\beta n^2 + \nu n \; = \; \beta (n - \mu)^2 - \nu^2 / 4 \beta 
	\qquad \text{with} \qquad 
	\mu \; = \; - \nu / 2 \beta,
$$
and let $\nu$ be such that $\lfloor \rho \rfloor \le \mu \le \lfloor \rho \rfloor + 1$. 
As $\beta \to \infty$, the measure $p_{\nu,\beta} = p_{-2\beta \mu,\beta}$ is very well approximated by the probability measure
$$
	p_{\mu,\beta}'
	\; = \; 
	\frac{\ed^{- \beta (\lfloor \rho \rfloor - \mu)^2}}{z'(\mu,\beta)} \delta_{\lfloor \rho \rfloor}
	\; + \; 
	\frac{\ed^{- \beta (\lfloor \rho \rfloor +1 - \mu)^2}}{z'(\mu,\beta)} \delta_{\lfloor \rho \rfloor + 1}
	\; = :\;
	(1 - \alpha(\mu,\beta)) \delta_{\lfloor \rho \rfloor} + \alpha(\mu,\beta) \delta_{\lfloor \rho \rfloor + 1}
$$
where $z'(\mu,\beta)$ is a normalization factor. 
Let $\langle \cdot \rangle_{\mu,\beta}'$ be the expectation with respect to the measure $p_{\mu,\beta}'$.
A computation shows that 
$$
	 | \langle (\gamma(0))^q \rangle_{\nu,\beta} - \langle (\gamma(0))^q \rangle_{\mu,\beta}' | \; = \; \mathcal O (\ed^{-\beta}), 
	 \qquad
	 q = 1,2, 
$$
and 
$$
	\alpha(\lfloor \rho \rfloor,\beta) \; = \; \mathcal O (\ed^{-\beta}), 
	\qquad 
	\alpha(\lfloor \rho \rfloor +1,\beta) \; = \; 1 + \mathcal O(\ed^{-\beta}).
$$
Expressing that $\langle \gamma (0) \rangle_{\nu,\beta} = \rho$ yields the relation
$$
	\rho \; = \; (1 - \alpha(\mu,\beta)) \lfloor \rho \rfloor + \alpha(\mu,\beta) (\lfloor \rho \rfloor +1) + \mathcal O (\ed^{-\beta}),
$$
hence $\alpha (\mu,\beta) = \rho - \lfloor \rho \rfloor + \mathcal O (\ed^{- \beta})$, and therefore
$$
	f(\rho,\beta) 
	\; = \; 
	(1 - \alpha(\mu,\beta)) (\lfloor \rho \rfloor)^2 + \alpha(\mu,\beta) (\lfloor \rho \rfloor +1)^2 + \mathcal O (\ed^{-\beta})
	\; = \; 
	f(\rho, \infty) + \mathcal O (\ed^{-\beta})
$$
as $\beta \to \infty$.
Finally the case $\rho \in \N$ follows from the cases $\rho \notin \N$ and the fact that the map $\rho \mapsto f(\rho, \beta)$ is increasing. 
\end{proof}

To any configuration $\gamma \in \Gamma_V$, we associate a probability distribution $p(\gamma, \cdot)$ on $\N$ by 
\begin{equation}\label{eq: density of a configuration}
	p(\gamma,n) = \frac{|\{ x \in \Lambda : \gamma (x) = n \}|}{V}.
\end{equation}
Thus $p(\gamma,n)$ measures the frequency at which a site is occupied by $n$ particles, if the system is in the configuration $\gamma$. 
Let $\mathcal P_V$ be the set of probability measures on $\N$ obtained in this way. 
Equivalently, $\mathcal P_V$ is the set of probability measures $p$ on $\N$ such that $Vp(n)$ is an integer for any $n \in \N$. 
Moreover, given $(\rho,\varepsilon) \in ]0,+\infty[^2$, we denote by $\mathcal P_{\rho,\varepsilon,V}$ the subset of $\mathcal P_V$
corresponding to configurations in $\Gamma_{\rho,\varepsilon,V}$. 
Equivalently, $\mathcal P_{\rho,\varepsilon,V}$ is the set of measures $p\in \mathcal P_V$ such that 
\begin{equation}\label{eq: density energy constraints for measures}
	\sum_{n\ge 0} n p(n) \; = \; \frac{\lfloor \rho V \rfloor}{V}, 
	\qquad 
	\sum_{n\ge 0} n^2 p(n) \; = \; \frac{\lfloor \varepsilon V \rfloor + \wp}{V}
\end{equation}

The entropy of a probability measure $p$ on $\N$ is defined as 
\begin{equation}\label{eq: entropy}
	s(p) \; = \; - \sum_{n\ge 0} p(n) \ln p(n)
\end{equation}
with the convention $0 \ln 0 = 0$.
If $p \in \mathcal P_V$, the number of configurations in $\Gamma_V$ giving rise to the measure $p$ can be computed explicitly: 
$$
	|\{ \gamma \in \Gamma_V : p(\gamma,\cdot) = p \}| \; = \; \frac{V !}{(p(0)V)! \dots (p(V)V)!} 
$$
and, using rigorous bounds on the Stirling's approximation, 
$$
	(2\pi n)^{1/2} (n/\ed)^n \; \le \; n ! \; \le \; (\ed^2 n)^{1/2} (n/\ed)^n, 
$$
valid for any $n\ge 1$, we get 
\begin{equation}\label{eq: bound from Stirling}
	\left(\prod_{n\in\N:p(n) > 0} (\ed^2 p(n)V)^{-1/2} \right) \ed^{s(p) V}
	\; \le \; 
	|\{ \gamma \in \Gamma_V : p(\gamma,\cdot) = p \}|
	\; \le \; 
	(\ed^2 V)^{1/2} \;  \ed^{s(p) V}.
\end{equation}
Finally, given a probability measure $p$ on $\N$, we define $\mathrm{supp} (p) = \{ n \in \N : p(n) > 0 \}$.

The key input for our proof of equivalence of ensembles is to find a state 
$p_V \in \mathcal P_{\rho,\varepsilon,V}$ that converges to some Gibbs state $p_{\nu,\beta}$ as $V \to \infty$, 
and we actually only need to know that the entropy of $p_V$ converges to the entropy of this Gibbs state. 
If $(\rho,\varepsilon)\in R_{\mathrm{pos}}$, 
the natural (and actually only possible) choice of thermodynamical parameters is furnished by Lemma~\ref{lem: positive temperature phase},
and it is possible to construct the state $p_V$ through some very explicit approximations.  
If $(\rho,\varepsilon)\in R_{\mathrm{inf}}$, we may instead consider an infinite temperature Gibbs state at the right density of particles. 
In the physical system, the excess of energy $\Delta \varepsilon$ may be concentrated on a single site by accumulating only about $V^{1/2}$ particles on that site. 
Implementing this idea to generate $p_V$ will produce an entropy difference that decays as $V^{-1/2}$. 
The strategy of concentrating the whole excess of energy was used in \cite{grosskinsky_et_al_2003}, 
and is a priori independent on the fact that it does indeed happen with an overwhelming probability, as Theorem~\ref{th:single site concentration} shows. 


\begin{Lemma}\label{lem: discrete approximation Gibbs state}
	Let first $(\rho,\varepsilon)\in R_{\mathrm{pos}} \cup \partial R_{\mathrm{inf}}$, 
	and let $(\nu,\beta)$ be the corresponding thermodynamical parameters furnished by Lemma~\ref{lem: positive temperature phase}. 
	For any $V\ge 1$, there exists $p_V \in \mathcal P_{\rho,\varepsilon,V}$ so that 
	\begin{equation}\label{eq: entropy difference 1}
		|s(p_V) - s(p_{\nu,\beta})| \; = \; \mathcal O \left(\frac{(\ln V)^3}{V}\right)  
	\end{equation}
	and $|\mathrm{supp}(p_V)| = \mathcal O(\ln V)$ as $V \to \infty$.
	Let then $(\rho,\varepsilon)\in R_{\mathrm{inf}} \backslash \partial R_{\mathrm{inf}}$, 
	and let $\nu$ be such that $\rho = \langle \gamma (0) \rangle_{\nu ,0}$. 
	For any $V \ge 1$, there exists $p_V \in \mathcal P_{\rho,\varepsilon,V}$ so that 
	\begin{equation}\label{eq: entropy difference 2}
		|s(p_V) - s(p_{\nu,0})| \; = \; \mathcal O (V^{-1/2})
	\end{equation}
	and $|\mathrm{supp}(p_V)| = \mathcal O (\ln V)$ as $V \to \infty$.
\end{Lemma}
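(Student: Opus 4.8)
The plan is to produce $p_V$ by discretising a well-chosen single-site Gibbs marginal $p^\star$ and then repairing, at a bounded number of sites, the few constraints that discretisation destroys. For $(\rho,\varepsilon)\in R_{\mathrm{pos}}\cup\partial R_{\mathrm{inf}}$ I take $p^\star=p_{\nu,\beta}$ with the parameters supplied by Lemma~\ref{lem: positive temperature phase} (with $\beta=0$ on $\partial R_{\mathrm{inf}}$, where $\nu$ is fixed by $\langle\gamma(0)\rangle_{\nu,0}=\rho$), so that $\langle\gamma(0)\rangle_{\nu,\beta}=\rho$ and $\langle(\gamma(0))^2\rangle_{\nu,\beta}=\varepsilon$ hold exactly. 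For $(\rho,\varepsilon)\in R_{\mathrm{inf}}\setminus\partial R_{\mathrm{inf}}$ I take $p^\star=p_{\nu,0}$, the geometric marginal of mean $\rho$, whose second moment is only $\varepsilon_{\mathrm c}(\rho)=2\rho^2+\rho<\varepsilon$, so that a macroscopic amount $\Delta\varepsilon$ of second moment is still missing. In both cases let $\ncutoff$ be the largest $n$ with $Vp^\star(n)\ge1$; since $p^\star$ has at worst exponential tails, $\ncutoff=\mathcal O(\ln V)$. Set $m_n=\lfloor Vp^\star(n)\rfloor$ for $0\le n\le\ncutoff$ and $m_n=0$ otherwise: these are non-negative integers with $\sum_n m_n=V-R$, the defect $R=\mathcal O(\ncutoff)$ being due to the per-coordinate rounding and to the truncated tail.

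The repair is done at $n\in\{0,1,2,3\}$, together with one far site in the infinite-temperature case. There I place a single extra unit of mass at $N:=\lfloor(V\Delta\varepsilon)^{1/2}\rfloor$ (note $3<\ncutoff<N$ for all large $V$, so these sites are distinct): because an occupation number of order $V^{1/2}$ already carries macroscopic energy, this one spike contributes $N^2/V=\Delta\varepsilon+\mathcal O(V^{-1/2})$ to the second moment while costing only $1/V$ of mass and $\mathcal O(V^{-1/2})$ of mean. It then remains to pick $\delta\in\mathbb Z^{\{0,1,2,3\}}$ solving
\[
 \sum_n\delta_n=c_0,\qquad \sum_n n\,\delta_n=c_1,\qquad \sum_n n^2\delta_n=c_2,
\]
where $c_0,c_1,c_2\in\mathbb Z$ are the residual defects in total mass, mean and second moment, and to set $Vp_V(n)=m_n+\delta_n$ (and $Vp_V(N)=1$ in the infinite-temperature case). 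In the first case all three $c_i$ are polylogarithmic in $V$, hence $\mathcal O((\ln V)^3)$; in the second, $c_0$ is polylogarithmic, $c_1\asymp V^{1/2}$ (the far spike carries $\asymp V^{1/2}$ particles), and $c_2\asymp V^{1/2}$ (since $N^2$ matches $V\Delta\varepsilon$ only up to $\mathcal O(V^{1/2})$). The $3\times4$ matrix with columns $(1,n,n^2)$, $n=0,\dots,3$, has rank $3$ over $\mathbb Q$ and image lattice exactly $\{(c_0,c_1,c_2)\in\mathbb Z^3:c_1\equiv c_2\ (\mathrm{mod}\ 2)\}$; and the $c_i$ do satisfy $c_1\equiv c_2\ (\mathrm{mod}\ 2)$ precisely because the parity correction $\wp$ in \eqref{eq: microcanonical set} makes $\lfloor\varepsilon V\rfloor+\wp$ have the same parity as $\lfloor\rho V\rfloor$, while $n^2\equiv n\ (\mathrm{mod}\ 2)$. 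So $\delta$ exists with $\|\delta\|_\infty$ of the order of the defects, whence $m_n+\delta_n\ge0$ for all large $V$ since $m_0,\dots,m_3=\Theta(V)$. Thus $p_V\in\mathcal P_{\rho,\varepsilon,V}$, and $|\mathrm{supp}(p_V)|\le\ncutoff+2=\mathcal O(\ln V)$.

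The entropy bound then follows by tallying the three sources of change in $s(p_V)-s(p^\star)$, using that $x\mapsto x\ln x$ has derivative bounded by $1+|\ln x|$. The truncated tail contributes $\sum_{n>\ncutoff}p^\star(n)\ln\frac1{p^\star(n)}$, which is polylogarithmic in $V$ over $V$; the rounding perturbs each of the $\mathcal O(\ncutoff)$ surviving weights by at most $1/V$, with a logarithmic factor $\mathcal O(\ln V)$ since those weights are $\ge1/V$, contributing $\mathcal O(\ncutoff\ln V/V)$; and the repair perturbs $\mathcal O(1)$ weights of size $\Theta(1)$ by $\mathcal O(\|\delta\|_\infty/V)$, contributing $\mathcal O(\|\delta\|_\infty/V)$, while in the infinite-temperature case the extra spike adds a further $\tfrac1V\ln V$. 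Summing, the total is $\mathcal O((\ln V)^3/V)$ in the first case and $\mathcal O(V^{-1/2})$ in the second, which are \eqref{eq: entropy difference 1} and \eqref{eq: entropy difference 2}.

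I expect the genuinely delicate point — and the reason the two error rates differ — to be the interaction of integrality with the two hard moment constraints in the infinite-temperature regime: realising the macroscopic excess energy $V\Delta\varepsilon$ forces an occupation number $\asymp V^{1/2}$, which being an integer can only tune the second moment in steps of $\asymp V^{1/2}$, so the unavoidable $\mathcal O(V^{-1/2})$ residual must be absorbed by the $\Theta(1)$ bulk weights — exactly what governs \eqref{eq: entropy difference 2}. The positive-temperature regime is softer, all relevant weights being $\gtrsim1/V$ and all defects only polylogarithmic. Keeping track of the parity bookkeeping that makes the repair solvable at all, and of these size estimates, is the real work; the remaining steps are the routine Stirling and moment estimates already recalled around \eqref{eq: bound from Stirling}.
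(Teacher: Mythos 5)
Your construction is essentially the paper's own proof: round the Gibbs marginal $\lfloor Vp^\star(n)\rfloor/V$ up to a cutoff $\mathcal O(\ln V)$, add the single spike at $\lfloor (V\Delta\varepsilon)^{1/2}\rfloor$ in the infinite-temperature case, and repair mass, mean and second moment by integer moves at small occupation numbers, with the parity correction $\wp$ (together with $n\equiv n^2 \bmod 2$) guaranteeing integer solvability and the same defect sizes ($\mathcal O((\ln V)^3)$ counts, resp.\ $\mathcal O(V^{1/2})$ counts) driving the two error rates. The paper's explicit two-step repair ($\delta_1$ at $\{0,1\}$ for the mean, the $(+1,-2,+1)$ move at $\{0,1,2\}$ for the energy) is just a particular solution of your $3\times 4$ lattice system, so the argument and estimates coincide.
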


\begin{proof}
We only need to consider $V \ge V_0$, for some given $V_0$. 

Let us first deal with the case $(\rho,\varepsilon)\in R_{\mathrm{pos}}\cup \partial R_{\mathrm{inf}}$, and let us construct the measure $p_V$ in three steps. 
Let us first define the probability measure $p^{(0)}$ on $\N$ by 
$$
	p^{(0)} (n) \; = \;  \frac{\lfloor V p_{\nu,\beta} (n) \rfloor}{V} \quad\text{for}\quad n\ge 1
$$
and $p^{(0)}(0) = 1 - \sum_{n\ge 1} p^{(0)} (n) \ge 0$ 
so that $p^{(0)} \in \mathcal P_V$.
In order to satisfy the constraint on the density of particles, we define $p^{(1)}$ by 
$p^{(1)}(n) = p^{(0)}(n)$ for $n \ge 2$ and 
\begin{equation}\label{eq: p 1 measure}
	p^{(1)}(0) \; = \;  p^{(0)} (0) - \delta_1, \quad 
	p^{(1)}(1) \; = \;  p^{(0)} (1) +  \delta_1
	\qquad  \text{with} \qquad
	\delta_1 \; = \; \frac{\lfloor \rho V \rfloor}{V} - \sum_{n \ge 0} n p^{(0)} (n) \, .
\end{equation}
Finally, in order to satisfy the constraint on the density of energy, we define $p^{(2)}$ by 
$p^{(2)}(n) = p^{(1)}(n)$ for $n \ge 3$ and 
\begin{equation}\label{eq: p 2 measure}
	\begin{split}
	&p^{(2)}(0) \; = \; p^{(1)}(0) + \delta_2, \quad
	p^{(2)}(1) \; = \; p^{(1)}(1) - 2\delta_2, \quad
	p^{(2)}(2) \; = \; p^{(1)}(2) + \delta_2, \\ 
	&\delta_2 \; = \; \frac{1}{2} 
	\left( \frac{\lfloor \varepsilon V \rfloor + \wp}{V} - \sum_{n\ge 0}n^2 p^{(1)}(n) \right) \, .
	\end{split}
\end{equation}
Notice that $V\sum_{n\ge 0}n^2 p^{(1)}(n)$ has the same parity as $\lfloor \rho V \rfloor$, 
hence the same parity as $\lfloor \varepsilon V \rfloor + \wp$, so that $V \delta_2 \in \N$. 
Elementary bounds yield
$$
	|\delta_1| \; = \; \mathcal O \left( \frac{(\ln V)^2}{V} \right) 
	\qquad \text{and} \qquad
	|\delta_2| \; = \; \mathcal O \left( \frac{(\ln V)^3}{V} \right)
$$
as $V \to \infty$. 
In particular, for $V_0$ large enough, $p^{(2)}(0)\ge 0$, $p^{(2)}(1) \ge 0$ and $p^{(2)}(2) \ge 0$, and the above construction implies that 
$p^{(2)}\in \mathcal P_{\rho,\varepsilon,V}$.
Moreover, $|\mathrm{supp}(p^{(2)})| = |\mathrm{supp}(p^{(0)})| = \mathcal O( \ln V) $, 
and a computation yields the bound \eqref{eq: entropy difference 1} with $p^{(2)}$ instead of $p_V$. 
We set finally $p_V = p^{(2)}$. 

We deal with the case where $(\rho,\varepsilon) \in R_{\mathrm{inf}}\backslash\partial R_{\mathrm{inf}}$ through a similar strategy.
Let $\Delta \varepsilon$ be the energy excess defined in \eqref{eq: energy excess} and let us concentrate this excess on a single site: 
Let us this time define $p^{(0)}$ by 
$$
	p^{(0)} (n) \; = \;  \frac{\lfloor V p_{\nu,0} (n) \rfloor}{V} \quad\text{for}\quad n\ge 1, \; n \ne  \lfloor (V\Delta \varepsilon)^{1/2}\rfloor, 
	\qquad 
	p^{(0)} (\lfloor (V\Delta \varepsilon)^{1/2}\rfloor) \; = \; \frac{1}{V}
$$
and $p^{(0)}(0) = 1 - \sum_{n\ge 1} p^{(0)} (n) \ge p_{\nu,0}(0) - 1/V \ge 0$, so that $p^{(0)} \in \mathcal P_V$, for $V_0$ large enough. 
The measures $p^{(1)}$ and $p^{(2)}$ are defined as before through \eqref{eq: p 1 measure} and \eqref{eq: p 2 measure} respectively. 
This time however, we find only
$$
	|\delta_1| \; = \; \mathcal O(V^{-1/2}) 
	\qquad \text{and} \qquad
	|\delta_2| \; = \; \mathcal O (V^{-1/2})
$$
as $V\to \infty$.
One concludes as in the previous case and again, one sets finally $p_V = p^{(2)}$.
\end{proof}

\begin{proof}[Proof of Theorem \ref{th: equivalence of ensembles relative entropy}]
Let $(\rho,\varepsilon) \in R_{\mathrm{pos}} \cup R_{\mathrm{inf}}$. 
If $(\rho,\varepsilon) \in R_{\mathrm{pos}}$, let $(\nu,\beta)$ be given by Lemma~\ref{lem: positive temperature phase}. 
Otherwise let $\nu$ be such that $\langle \gamma (0) \rangle_{\nu,0}=\rho$ and let $\beta = 0$. 
We start with 
\begin{align} 
	S( P_{\rho,\varepsilon,V}^{\mathrm{mic}} , P_{\nu,\beta,V}^{\mathrm{grd}} )
	\; &= \;
	\sum_{\gamma \in \Gamma_V} P_{\rho,\varepsilon,V}^{\mathrm{mic}}(\gamma) 
	\ln \frac{P_{\rho,\varepsilon,V}^{\mathrm{mic}}(\gamma)}{P_{\nu,\beta,V}^{\mathrm{grd}}(\gamma)}
	\; = \; 
	\ln \frac{Z(\nu,\beta,V)}{|\Gamma_{\rho,\varepsilon,V}| \ed^{- \nu \lfloor \rho V \rfloor - \beta (\lfloor \varepsilon V \rfloor + \wp)}} 
	\nonumber\\
	\; &\le \;
	- \ln |\Gamma_{\rho,\varepsilon,V}|
	- \ln \frac{\ed^{-(\nu \rho + \beta\varepsilon)V}}{(z(\nu,\beta))^V} + c
	\; = \; 
	s(p_{\nu,\beta}) V - \ln |\Gamma_{\rho,\varepsilon,V}| + c,
	\label{eq: proof th 1: first eq}
\end{align}
for some constant $c < + \infty$. 
To get a lower bound on $\ln |\Gamma_{\rho,\varepsilon,V}|$, we will use the first inequality in \eqref{eq: bound from Stirling}
with $p = p_V$ as given by by Lemma~\ref{lem: discrete approximation Gibbs state}.
Since $|\mathrm{supp}(p_V)| = \mathcal O (\ln V)$, 
the product in the left-hand side of \eqref{eq: bound from Stirling} is lower bounded by $V^{-c \ln V}$ for some $c < +\infty$, and thus 
\begin{equation}\label{eq: last in proof th 1}
	\ln |\Gamma_{\rho,\varepsilon,V}| 
	\; \ge \;
	- c (\ln V)^2 + s(p_V) V.
\end{equation}
Finally, we use the estimate \eqref{eq: entropy difference 1} if $(\rho,\varepsilon) \in R_{\mathrm{pos}} \cup \partial R_{\mathrm{inf}}$,
and \eqref{eq: entropy difference 2} if $(\rho,\varepsilon) \in R_{\mathrm{inf}} \backslash \partial R_{\mathrm{inf}}$, 
in order to replace $s(p_V)$ by $s(p_{\nu,\beta})$ in \eqref{eq: last in proof th 1} up to a small error, 
and we obtain the result by inserting \eqref{eq: last in proof th 1} in \eqref{eq: proof th 1: first eq}.   
\end{proof}

\begin{proof}[Proof of Corollary \ref{co: equivalence of ensembles local observables}.]
We follow \cite{grosskinsky_et_al_2003}.
The marginal of $P^{\mathrm{grd}}_{\nu,\beta,V}$ on $\Gamma_{V_0}$ is simply $P^{\mathrm{grd}}_{\nu,\beta,V_0}$, and 
we denote the marginal of $P^{\mathrm{mic}}_{\rho,e,V}$ by $P^{\mathrm{mic},V_0}_{\rho,e,V}$. 
By subadditivity of the relative entropy, 
$$
	S(P^{\mathrm{mic},V_0}_{\rho,e,V},P^{\mathrm{grd}}_{\nu,\beta,V_0}) 
	\; \le \;
	\left\lfloor\frac{V_0}{V}\right\rfloor S(P^{\mathrm{mic}}_{\rho,e,V},P^{\mathrm{grd}}_{\nu,\beta,V}).
$$  
By Theorem \ref{th: equivalence of ensembles relative entropy}, the right hand side goes to $0$ as $V \to \infty$. 
This implies the weak convergence $P^{\mathrm{mic},V_0}_{\rho,e,V} \to P^{\mathrm{grd}}_{\nu,\beta,V_0}$, hence our claim. 
\end{proof}

\section{Proof of Theorem~\ref{th:single site concentration}}\label{sec: concentration on a single site}

As stressed in the introduction, while the entropy density stays constant in the thermodynamic limit as $\varepsilon$ increases for $\varepsilon > \varepsilon_c (\rho)$, 
the total entropy decreases sub-extensively.
Assuming that the system will indeed form a condensate with about $(V\Delta \varepsilon)^{1/2}$ particles, 
the number of particles in the bulk decays by the same amount, hence also the entropy gets reduced by a amount proportional to $(V\Delta \varepsilon)^{1/2}$. 
As an upshot, it is good to have in mind that the entropy differences that are relevant in this part are of order $V^{1/2}$ at least.   
Our first lemma yields a bound on the entropy resulting from the number of measures in $\mathcal P_{\rho,\varepsilon,V}$ and shows that this entropy is negligible.  

\begin{Lemma}\label{lem: number of states}
For $(\rho,\varepsilon) \in ]0,+\infty [^2$, 
$
	\ln |\mathcal P_{\rho,\varepsilon,V}| \; = \; \mathcal O (V^{1/3} \ln V)
$
as $V \to \infty$. 
\end{Lemma}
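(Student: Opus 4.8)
The plan is to estimate $|\mathcal{P}_{\rho,\varepsilon,V}|$ by a direct enumeration, using the fact that the energy constraint severely restricts both the support and the individual occupation numbers of any admissible profile. Set $N = \lfloor \rho V\rfloor$ and $E = \lfloor \varepsilon V \rfloor + \wp$, so that $p \in \mathcal{P}_{\rho,\varepsilon,V}$ is the same datum as a sequence of integers $m_n := Vp(n) \ge 0$ with $\sum_{n\ge 0} m_n = V$, $\sum_{n\ge 0} n\, m_n = N$ and $\sum_{n\ge 0} n^2 m_n = E$, and note $E \le \varepsilon V + 1$.

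The first step is a support bound. If $m_n > 0$ then $n^2 \le n^2 m_n \le E$, so $\mathrm{supp}(p) \subseteq \{0,1,\dots,M\}$ with $M := \lfloor(\varepsilon V + 1)^{1/2}\rfloor = \mathcal{O}(V^{1/2})$. The crucial refinement is that the support cannot be large: writing $\mathrm{supp}(p)\setminus\{0\} = \{n_1 < n_2 < \dots < n_k\}$, these are distinct positive integers, so $n_i \ge i$, and the energy constraint gives
\[
	E \;=\; \sum_{i=1}^k n_i^2 m_{n_i} \;\ge\; \sum_{i=1}^k n_i^2 \;\ge\; \sum_{i=1}^k i^2 \;\ge\; \frac{k^3}{3}\,,
\]
hence $k \le (3E)^{1/3} \le K := \lceil(3\varepsilon V + 3)^{1/3}\rceil = \mathcal{O}(V^{1/3})$. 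So every $p \in \mathcal{P}_{\rho,\varepsilon,V}$ is supported on at most $K+1$ integers, all lying in $\{0,\dots,M\}$.

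The second step is the count. A measure $p \in \mathcal{P}_{\rho,\varepsilon,V}$ is determined by its support $S \subseteq \{0,\dots,M\}$, with $|S| \le K+1$, together with the values $(m_n)_{n\in S}$, each an integer in $\{1,\dots,V\}$. Using $\binom{M+1}{\ell}\le (M+1)^{\ell}$, this yields
\[
	|\mathcal{P}_{\rho,\varepsilon,V}| \;\le\; \sum_{\ell=0}^{K+1}\binom{M+1}{\ell}\,V^{\ell} \;\le\; (K+2)\,(M+1)^{K+1}\,V^{K+1}\,,
\]
and therefore, since $K = \mathcal{O}(V^{1/3})$, $\ln(M+1) = \mathcal{O}(\ln V)$,
\[
	\ln|\mathcal{P}_{\rho,\varepsilon,V}| \;\le\; \ln(K+2) + (K+1)\ln(M+1) + (K+1)\ln V \;=\; \mathcal{O}(V^{1/3}\ln V)\,,
\]
as claimed. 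There is no serious obstacle; the only point worth isolating is that the naive support bound $|\mathrm{supp}(p)| = \mathcal{O}(V^{1/2})$ coming from $n \le \sqrt{E}$ alone is not enough, and it is the elementary fact that $k$ distinct positive occupation levels already cost $1^2+\dots+k^2 \asymp k^3$ units of energy that upgrades the exponent from $1/2$ to $1/3$.
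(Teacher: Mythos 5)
Your proof is correct. The bound $m_n\ge 1$ for $n$ in the support, together with $\sum_n n^2 m_n = E \le \varepsilon V+1$, does give both the support-range bound $n\le M=\mathcal O(V^{1/2})$ and, via $1^2+\cdots+k^2\ge k^3/3$, the support-size bound $k=\mathcal O(V^{1/3})$; the final enumeration (choose the support inside $\{0,\dots,M\}$, then assign each level a multiplicity in $\{1,\dots,V\}$) is a legitimate overcount and yields $\ln|\mathcal P_{\rho,\varepsilon,V}|=\mathcal O(V^{1/3}\ln V)$. Your route differs from the paper's in the key counting step: the paper bounds $|\mathcal P_{\rho,\varepsilon,V}|$ by the number $Z_N$ of integer sequences with $\sum_n n^2 \mathrm k(n)\le N$, splits the levels at a scale $N^{\alpha}$, bounds the low levels crudely by $N^{N^{\alpha}}$ and the high levels by a binomial factor (using that their multiplicities, and the number of nonzero ones, are at most $N^{1-2\alpha}$), and then optimizes $\alpha=1/3$. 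You replace this two-scale split and optimization by the single observation that occupying $k$ distinct positive levels already costs energy of order $k^3$, which pins the exponent $1/3$ directly and makes the subsequent count a one-line product of ``choose the support'' and ``choose the multiplicities''. Both arguments are elementary and give the same bound; yours is arguably more transparent about \emph{why} $1/3$ appears, while the paper's split would be the more flexible template if one needed to track the constant or handle constraints where the distinct-parts trick is unavailable.
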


\begin{proof}
Let $N\in \N$ be large enough for all expressions below to make sense and let 
$$
	Z_N
	\; = \; 
	\left|\left\{ 
	\mathrm k \in \mathcal \N^{\lfloor N^{1/2}\rfloor} : \sum_{n = 1}^{\lfloor N^{1/2}\rfloor} n^2 \mathrm k(n) \le N 
	\right\}\right| \, .
$$
For any $r \in \{ 1, \dots , \lfloor N^{1/2}\rfloor - 1\}$, it holds that 
$$
	Z_N
	\; \le \; 
	N^{r}
	\left|\left\{
	(\mathrm k (r+1), \dots , \mathrm k(\lfloor N^{1/2}\rfloor)) \in \N^{\lfloor N^{1/2}\rfloor-r} : \sum_{n=r+1}^{\lfloor N^{1/2}\rfloor} n^2 \mathrm k(n) \le N
	\right\}\right| \, .
$$
Let now $r = \lfloor N^\alpha \rfloor$ for some $1/4 < \alpha < 1/2$ to be fixed later. 
Since $n^2 \ge (\lfloor N^\alpha \rfloor + 1)^2$ in the last sum, 
we have the bound $\mathrm k (n) \le N^{1 - 2 \alpha}$ 
and we know that $\mathrm k \in \N^{\lfloor N^{1/2}\rfloor-\lfloor N^\alpha \rfloor}$ has at most $\lfloor N^{1 - 2 \alpha}\rfloor $ non-zero coordinates. 
Hence
$$
	Z_N 
	\; \le \; 
	N^{N^\alpha}
	\left(
	\begin{array}{c}
	\lfloor N^{1/2}\rfloor - \lfloor N^\alpha \rfloor \\
	\lfloor N^{1 - 2 \alpha}\rfloor
	\end{array}
	\right)
	\left( N^{1 - 2 \alpha} \right)^{N^{1 - 2 \alpha}} 
	\; \le \; 
	\ed^{c \ln N (N^\alpha + N^{1 - 2 \alpha})}
$$
for some constant $c < + \infty$.
Taking $\alpha = 1/3$ yields the bound $\ln Z_N = \mathcal O (N^{1/3}\ln N)$ as $N\to \infty$. 
Finally, to get the claim, we observe that $|\mathcal P_{\rho,\varepsilon,V}| \le Z_{N}$ with $N = \lfloor \varepsilon V \rfloor + 1$.  
\end{proof}

Given $I \in \N\backslash \{ 0 \}$ and $(\nu,\beta)\in \R^2$, we define a probability measure on $\N$ by 
$$
	p_{\nu,\beta}^I (n) \; = \; \frac{1_{\{ n\le I \} } \ed^{- \nu n - \beta n^2}}{z^I(\nu,\beta)}
$$
where $z^I(\nu,\beta)$ ensures the normalization. 
This corresponds to the one-site marginal of a Gibbs state for a system where the number of particles per sites is imposed to be at most $I$. 
The next lemma contains the crucial input to show the condensation on a single site: 
expressions \eqref{eq: definition of rho I} and \eqref{eq: entropy difference with cut-off I} below
furnish a practical way to estimate the entropy of the Gibbs states $p_{\nu,\beta}^I$ when $(\rho,\varepsilon) \in R_{\mathrm{inf}} \backslash \partial R_{\mathrm{inf}}$. 
For an appropriate choice of $(\nu,\beta)$, 
the Gibbs state $p_{\nu,\beta}^I$ has maximal entropy among all probability measures concentrated on $\{0, \dots ,I \}$ 
and satisfying the density and energy constraints \eqref{eq: density energy constraints for measures}. 
Hence this lemma furnishes an explicit bound on the entropy of all states having at most $I$ particles per site 
and will eventually allow to conclude that $\gamma_{max}^2 / V$ is not much smaller than $\Delta \varepsilon$ with high probability. 
The proof of the lemma is based on an approximate guess for the value of $(\nu,\beta)$ 
so that the constraints constraints \eqref{eq: density energy constraints for measures} are satisfied, 
as expressed in \eqref{eq: variables delta nu delta beta} below.

\begin{Lemma}\label{lem: entropy gibbs on truncated space}
	Let $(\rho,\varepsilon) \in R_{\mathrm{inf}} \backslash \partial R_{\mathrm{inf}}$,
	let 
	\begin{equation}\label{eq: definition of rho I}
		\rho_I \; = \;  \frac{\lfloor \rho V\rfloor }{V} - \frac{\Delta\varepsilon}{I} 
	\end{equation}
	and let then $\nu_I$ be such that $\langle \gamma (0) \rangle_{\nu_I,0} = \rho_I$, assuming that $I,V$ are large enough so that $\rho_I > 0$. 
	For any such $I,V$,
	there exists a unique $(\nu,\beta)\in \R^2$ such that $\langle \gamma (0) \rangle_{\nu,\beta}^I = \lfloor \rho V \rfloor / V$ 
	and 
	$\langle (\gamma(0))^2 \rangle_{\nu,\beta}^I = (\lfloor \varepsilon V \rfloor + \wp )/V$, 
	and moreover
	\begin{equation}\label{eq: entropy difference with cut-off I}
		|s(p^I_{\nu,\beta}) - s(p_{\nu_I,0})|  \; \le \;  \mathcal O \left(\frac{\ln I}{I^2}\right)
	\end{equation}
	as $I \to \infty$. 
\end{Lemma}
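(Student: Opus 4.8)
\noindent\emph{Sketch of the intended proof.}

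The plan is to first settle existence and uniqueness of $(\nu,\beta)$ by convexity, and then to pin $(\nu,\beta)$ down accurately enough, via an explicit near-solution of the two moment constraints, to evaluate the entropy through $s(p^I_{\nu,\beta})=\nu\tfrac{\lfloor\rho V\rfloor}{V}+\beta\tfrac{\lfloor\varepsilon V\rfloor+\wp}{V}+\ln z^I(\nu,\beta)$. For the first part, set $g^I(\nu,\beta)=-\ln z^I(\nu,\beta)$; exactly as in the proof of Lemma~\ref{lem: positive temperature phase}, $\nabla g^I=(\langle\gamma(0)\rangle^I_{\nu,\beta},\langle(\gamma(0))^2\rangle^I_{\nu,\beta})$ and the Hessian of $g^I$ is minus the covariance matrix of $\gamma(0)$ and $(\gamma(0))^2$ under $p^I_{\nu,\beta}$, which is negative definite for $I\ge 2$. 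Hence $g^I$ is strictly concave, $\nabla g^I$ is injective, and its image is the interior of the convex hull of $\{(n,n^2):0\le n\le I\}$, i.e.\ $\{(\rho',\varepsilon'):0<\rho'<I,\ \varepsilon_{\mathrm{gs}}(\rho')<\varepsilon'<\rho' I\}$. Since $(\rho,\varepsilon)\in R_{\mathrm{inf}}\setminus\partial R_{\mathrm{inf}}$ gives $\varepsilon_{\mathrm{gs}}(\rho)<\varepsilon_{\mathrm c}(\rho)<\varepsilon$ (the first inequality holding for every $\rho>0$) and $\varepsilon<\rho I$ once $I>\varepsilon/\rho$, the point $(\lfloor\rho V\rfloor/V,(\lfloor\varepsilon V\rfloor+\wp)/V)$ lies in that interior for all large $I,V$, which yields existence and uniqueness of $(\nu,\beta)$.

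For the entropy estimate, write $\bar\rho=\lfloor\rho V\rfloor/V$, $\bar\varepsilon=(\lfloor\varepsilon V\rfloor+\wp)/V$ and recall that $p_{\nu_I,0}$ is the geometric law of mean $\rho_I=\bar\rho-\Delta\varepsilon/I$, so $\langle(\gamma(0))^2\rangle_{\nu_I,0}=2\rho_I^2+\rho_I=\varepsilon_{\mathrm c}(\rho)+\mathcal O(1/I)$. The near-solution recorded in \eqref{eq: variables delta nu delta beta} is $\nu=\nu_I+\delta\nu$, $\beta=\delta\beta$, with $\delta\nu=\mathcal O(I^{-2})$ and $\delta\beta=-\nu_I/I+2(\ln I)/I^2+\mathcal O(I^{-2})$, negative. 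The scale $\delta\beta\sim-\nu_I/I$ is forced: since $\tfrac{\mathrm{d}}{\mathrm{d}n}(-\nu n-\beta n^2)$ vanishes near $n\approx I/2$ and becomes positive towards $n=I$, the weight $e^{-\nu n-\beta n^2}$ — still $\simeq e^{-\nu_I n}$ for $n=\mathcal O(\sqrt I)$ — develops a secondary bump near $n=I$, of total mass $q=e^{-\nu I-\beta I^2}(1+\mathcal O(\ln I/I))$, and the $2(\ln I)/I^2$ term is exactly what makes $q\simeq\Delta\varepsilon/I^2$. Splitting every sum at $n=I/2$, one checks that up to $\mathcal O(e^{-c\sqrt I})$ the measure $p^I_{\nu,\beta}$ is a geometric bulk $\simeq(1-q)\,p_{\nu_I,0}$ plus a bump near $n=I$ (itself decaying geometrically downward from $I$) of mass $q\simeq\Delta\varepsilon/I^2$ and second moment $\simeq qI^2\simeq\Delta\varepsilon$, and that with these choices the two constraints hold up to $\langle\gamma(0)\rangle^I=\bar\rho+\mathcal O(1/I)$ and $\langle(\gamma(0))^2\rangle^I=\bar\varepsilon+\mathcal O(1/I)$.

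I would then transfer this to the true solution: since $\nabla g^I$ is a bijection, the true $(\nu,\beta)$ is the near-solution plus the inverse Hessian applied to the $\mathcal O(1/I)$ mismatch, and a direct computation of the moments of the bulk--bump measure gives $\mathrm{Var}(\gamma(0))=\Theta(1)$, $\mathrm{Cov}(\gamma(0),(\gamma(0))^2)=\Theta(I)$, $\mathrm{Var}((\gamma(0))^2)=\Theta(I^2)$ with determinant $\Theta(I^2)$ — the apparent cancellation $\Delta\varepsilon-\Delta\varepsilon$ there leaving the positive term $\rho_I+\rho_I^2$ — so that the true parameters still satisfy $\nu=\nu_I+\mathcal O(1/I)$ and $\beta=-\nu_I/I+2(\ln I)/I^2+\mathcal O(I^{-2})$. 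Plugging these into $s(p^I_{\nu,\beta})=\nu\bar\rho+\beta\bar\varepsilon+\ln z^I(\nu,\beta)$, the $\nu$-dependence largely cancels between the first and third terms (as $\partial_\nu\ln z^I=-\langle\gamma(0)\rangle^I$), leaving $s(p^I_{\nu,\beta})=\nu_I\bar\rho+\beta\bar\varepsilon+\ln z^I(\nu_I,\beta)+\mathcal O(I^{-2})$; expanding $\ln z^I(\nu_I,\beta)=\ln z(\nu_I,0)+(\nu_I/I)(2\rho_I^2+\rho_I)+\mathcal O((\ln I)/I^2)$ via the bulk--bump split and using $\nu_I\bar\rho=\nu_I\rho_I+\nu_I\Delta\varepsilon/I$ and $\beta\bar\varepsilon=-(\nu_I/I)\bar\varepsilon+\mathcal O((\ln I)/I^2)$, this collapses to $s(p^I_{\nu,\beta})-s(p_{\nu_I,0})=\tfrac{\nu_I}{I}(\Delta\varepsilon-\bar\varepsilon+2\rho_I^2+\rho_I)+\mathcal O((\ln I)/I^2)=\mathcal O((\ln I)/I^2)$, the first parenthesis being only $\mathcal O(1/I)$ precisely because $\bar\varepsilon=\varepsilon_{\mathrm c}(\rho)+\Delta\varepsilon+\mathcal O(1/V)$ and $2\rho_I^2+\rho_I=\varepsilon_{\mathrm c}(\rho)+\mathcal O(1/I)$. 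For the lower side of \eqref{eq: entropy difference with cut-off I} one may skip the inversion altogether: $p^I_{\nu,\beta}$ maximises entropy among probability measures on $\{0,\dots,I\}$ obeying the two constraints, so exhibiting one such measure — a truncated geometric of mass $1-q'$ together with an atom of mass $q'$ at $n=I$, with $\nu',q'$ fitted to the constraints — of entropy $s(p_{\nu_I,0})+\mathcal O((\ln I)/I^2)$ suffices.

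The main obstacle is the precision bookkeeping in the last step: several genuinely $\mathcal O(1/I)$-size contributions — $\beta\bar\varepsilon$, the $e^{-\beta n^2}$ correction to the bulk part of $\ln z^I$, and $\nu_I\Delta\varepsilon/I$ inside $\nu\bar\rho$ — must be retained and shown to cancel down to $\mathcal O((\ln I)/I^2)$, which is exactly where the identity $\varepsilon_{\mathrm c}(\rho)=2\rho^2+\rho$ is used; one must also identify the first two terms of $\delta\beta$ (controlling its remainder by $\mathcal O(I^{-2})$) and confirm that the covariance determinant does not degenerate below $\Theta(I^2)$, so that the passage from the near-solution to the true parameters is well conditioned. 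Everything else is a routine expansion around the geometric law $p_{\nu_I,0}$.
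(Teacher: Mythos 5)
Your proposal is correct in substance and, at its core, it is the paper's argument: the same ansatz \eqref{eq: variables delta nu delta beta} with $\beta\simeq-\nu_0/I+2\ln I/I^2$, the same device of splitting off the exponentially suppressed region around $n\approx I/2$ so that only the geometric bulk is Taylor-expanded while the secondary bump near $n=I$ (mass $\simeq\Delta\varepsilon/I^2$, energy $\simeq\Delta\varepsilon$) is treated as an $\mathcal O(I^{-2})$ additive correction to $z^I$, and the same final cancellation of the $\mathcal O(1/I)$ terms, which rests on $\varepsilon_{\mathrm c}(\rho)=2\rho^2+\rho$; your collapse to $\tfrac{\nu_I}{I}(\Delta\varepsilon-\bar\varepsilon+2\rho_I^2+\rho_I)+\mathcal O(\ln I/I^2)$ is the same identity the paper obtains in the form $\nu(\rho-\rho_I)+\beta(\varepsilon-\varepsilon_I)+\mathcal O(I^{-2})$. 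Where you genuinely deviate is in how the true $(\nu,\beta)$ is pinned down: you build an explicit near-solution and correct it by the inverse Hessian, computing $\mathrm{Var}(\gamma(0))=\Theta(1)$, $\mathrm{Cov}(\gamma(0),(\gamma(0))^2)=\Theta(I)$, $\mathrm{Var}((\gamma(0))^2)=\Theta(I^2)$ with determinant $\Theta(I^2)$ (your covariance asymptotics, including the surviving $\rho_I+\rho_I^2$ factor, are right); the paper instead avoids any conditioning estimate by a softer route, namely the limits \eqref{eq: varying beta} in $\delta\beta$, the derivative bound \eqref{eq: varying nu} in $\delta\nu$, and uniqueness of the maximizer, which together force $\delta\nu,\delta\beta$ to stay bounded. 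Your route buys explicit quantitative control of $(\nu,\beta)$ (and a cleaner existence statement via the convex-hull image of $\nabla g^I$, where you correctly note the condition $I\gtrsim\varepsilon/\rho$ that the lemma's phrasing glosses over), at the price of the step you yourself flag: ``near-solution plus inverse Hessian applied to the mismatch'' is a Newton heuristic, and to make it a proof you must first confine the true solution to a bounded box in the $(\delta\nu,\delta\beta)$ variables (Newton--Kantorovich on such a box, or a boundary/monotonicity argument as in the paper), since the uniform Hessian bounds you need are only available there; with that confinement supplied, your bookkeeping closes and yields the stated $\mathcal O(\ln I/I^2)$ bound.
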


\begin{Remark}
The value of $(\nu,\beta)$ in the above lemma depends obviously on $I$ and $V$, and the value of $\rho_I$ and $\nu_I$ depends on $V$. 
\end{Remark}

\begin{Remark}
Later on, in the proof of Theorem~\ref{th:single site concentration}, 
the cut-off $I$ will be taken (slightly smaller than) $(V \Delta \varepsilon)^{1/2}$, 
hence the density $\rho_I$ will be (slightly smaller than) the density resulting from the condensation of all the excess of energy on a single site. 
\end{Remark}

\begin{proof}
The dependence on $V$ is completely irrelevant and to simplify our expressions, 
we will simply write $\rho$ for $\lfloor \rho V \rfloor / V$ and $\varepsilon$ for $(\lfloor \varepsilon V \rfloor + \wp )/V$. 
The mere existence and unicity of the parameters $(\nu,\beta)$ follows by maximizing the entropy defined in \eqref{eq: entropy}
under the constraints \eqref{eq: density energy constraints for measures}.
But to get an estimate on the entropy $s(p^I_{\nu,\beta})$, we need to know how $(\nu,\beta)$ behaves asymptotically as $I \to \infty$. 

Let $\nu_0 > 0$ be such that $\rho = \langle \gamma(0) \rangle_{\nu_0,0}$, and let us introduce the variables $(\delta \nu, \delta \beta) \in \R^2$ defined by
\begin{equation}\label{eq: variables delta nu delta beta}
	\nu \; = \; \nu_0 + \frac{\delta \nu}{I} 
	\qquad \text{and} \qquad 
	\beta \; = \; - \frac{1}{I} \left( \nu_0 + \frac{\delta \nu}{I} \right) + \frac{2 \ln I}{I^2} + \frac{\delta \beta}{I^2}.
\end{equation}
In terms of these variables, the partition function of the system reads 
$$
	z^I(\nu, \beta) \; = \; \sum_{n=0}^I \, 
	\ed^{- \left(\nu_0 + \frac{\delta \nu}{I}\right) n \left(1 - \frac{n}{I}\right)} \, 
	\ed^{- \delta \beta \left(\frac{n}{I}\right)^2} \, 
	\ed^{- \left(\frac{n}{I}\right)^2 \ln (I^2)}
$$
and one checks that $z^I(\nu , \beta) \to z(\nu_0,0)$ as $I \to \infty$ for given $\delta \beta$ 
and for $\delta \nu$ varying possibly with $I$, but in such a way that $\delta \nu / I \to 0$. 
Thanks to this, one verifies that 
\begin{equation}\label{eq: varying beta}
	\lim_{\delta \beta \to + \infty} \lim_{I \to \infty}  \langle (\gamma(0))^2 \rangle_{\nu,\beta}^I \; = \; \langle (\gamma (0))^2 \rangle_{\nu_0,0}
	\qquad\text{and}\qquad
	\lim_{\delta \beta \to - \infty} \lim_{I \to \infty}  \langle (\gamma(0))^2 \rangle_{\nu,\beta}^I \; = \; + \infty
\end{equation}
where again $\delta \nu$ is allowed to vary with $I$ in such a way that $\delta \nu / I \to 0$ as $I \to \infty$. 
Finally, one finds a constant $c > 0$ so that 
\begin{equation}\label{eq: varying nu}
	\frac{\partial}{\partial (\delta \nu)} \langle \gamma (0) \rangle_{\nu,\beta} \; \ge \; \frac{c}{I}
\end{equation}
for any $\delta \beta$ in a compact interval, and any $\delta \nu$ such that $|\delta \nu |/ I \le 1$. 
Therefore, by \eqref{eq: varying beta}, for any $\nu \in [- I,I]$, there exists $\delta\beta(\delta\nu)$ that remains bounded as $I\to \infty$ and such that 
$\langle (\gamma (0))^2 \rangle_{\nu,\beta}^I = \varepsilon$.
Next, by \eqref{eq: varying nu}, one finds $\delta \nu$ that remains bounded as $I \to \infty$ and so that  
$\langle \gamma (0) \rangle_{\nu,\beta(\delta \nu)}^I = \rho$ 
where $\beta(\delta \nu)$ is the value of $\beta$ obtained from \eqref{eq: variables delta nu delta beta} with $\delta \beta = \delta \beta (\delta \nu)$. 

We have thus reached the conclusion that if $(\nu,\beta)$ 
is such that $\rho = \langle \gamma (0)\rangle_{\nu,\beta}^I$ and $\varepsilon = \langle (\gamma(0))^2 \rangle_{\nu,\beta}^I$, 
then $\delta\nu$ and $\delta\beta$ remain bounded as $I\to \infty$. 
We assume from now on that $(\nu,\beta)$ is as in the statement of the lemma, 
and we move to the estimate on $s(p^I_{\nu,\beta})$: 
\begin{equation}\label{eq: some entropy difference}
	s(p^I_{\nu,\beta}) - s(p_{\nu_I,0})
	\; = \; 
	\nu \rho + \beta \varepsilon - \nu_I \rho_I + \ln z^I(\nu,\beta) - \ln z(\nu_I,0).
\end{equation}
To estimate the difference of logarithms, let us first notice that
$$
	z^I (\nu,\beta) \; = \; z^{\lfloor I/2 \rfloor}(\nu,\beta) + \mathcal O (I^{-2})
	\qquad \text{and} \qquad 
	z(\nu_0,0) \; = \; z^{\lfloor I/2 \rfloor} (\nu_I,0) + \mathcal O (e^{-c I})
$$
for some $c > 0$. 
It may be worth pointing out that the parameters $(\nu,\beta)$ are obviously the same on both side of the first of these equations, 
and that $z^{\lfloor I/2 \rfloor}$ simply represents the truncation in the sum defining the partition function after the $\lfloor I/2 \rfloor$ first terms.
This truncation allows to perform a second order expansion
since the function $(\tilde \nu, \tilde \beta) \mapsto z^{\lfloor I/2 \rfloor} (\tilde \nu,\tilde \beta)$ 
admits bounded second derivatives on the segment joining $(\nu_I,0)$ to $(\nu,\beta)$: 
\begin{align}
	\frac{z^{\lfloor I/2\rfloor } (\nu,\beta)}{z^{\lfloor I/2\rfloor}(\nu_I,0)} 
	\; &= \; 
	1 + \frac{1}{z^{\lfloor I/2 \rfloor}(\nu_I,0)} \frac{\partial z^{\lfloor I/2 \rfloor}}{\partial \nu}(\nu_I,0) (\nu - \nu_I) + 
	\frac{1}{z^{\lfloor I/2 \rfloor}(\nu_I,0)} \frac{\partial z^{\lfloor I/2 \rfloor}}{\partial \beta}(\nu_I,0) \beta 
	+ \mathcal O \left(\frac{1}{I^{2}}\right)
	\nonumber\\ 
	\; &= \;
	1 - \rho_I (\nu - \nu_I) - \varepsilon_I \beta + \mathcal O (I^{-2})
	\label{eq: bound on the quotient of z}
\end{align}
with $\varepsilon_I = \langle (\gamma (0))^2 \rangle_{\nu_I,0}$. 
To get the bound on the first line, we used that $\beta^2, (\nu - \nu_I) = \mathcal O (I^{-2})$, 
as follows from \eqref{eq: variables delta nu delta beta} and \eqref{eq: definition of rho I}, 
while to get the second line, we replaced again $z^{\lfloor I/2\rfloor}$ by $z$ up to an exponentially small error in $I$.
Inserting the bound \eqref{eq: bound on the quotient of z} in \eqref{eq: some entropy difference}, we get 
$$
	s(p^I_{\nu,\beta}) - s(p_{\nu_I,0})
	\; = \; 
	\nu (\rho - \rho_I) + \beta (\varepsilon - \varepsilon_I) + \mathcal O (I^{-2}). 
$$
For the first term we get directly from \eqref{eq: definition of rho I} that
$$
	\nu (\rho - \rho_I) \; = \; \nu_0 \frac{\Delta \varepsilon}{I} + \mathcal O (I^{-2})
$$
and for the second
\begin{align*}
	\beta (\varepsilon - \varepsilon_I) 
	\; &= \;
	\beta \Delta \varepsilon  + \beta (\langle (\gamma(0))^2\rangle_{\nu_0,0} - \langle (\gamma(0))^2\rangle_{\nu_I,0})\\
	\; &= \; 
	-\nu_0 \frac{\Delta \varepsilon}{I} - 
	\frac{\nu_0}{I} \big(\langle (\gamma(0))^2\rangle_{\nu_0,0} - \langle (\gamma(0))^2\rangle_{\nu_I,0}\big) + \mathcal O (I^{-2}\ln I)
	\; = \; 
	-\nu_0 \frac{\Delta \varepsilon}{I} + \mathcal O (I^{-2}\ln I)
\end{align*}
by \eqref{eq: variables delta nu delta beta}.
\end{proof}

Finally, we need an improvement on the bound \eqref{eq: entropy difference 2} in Lemma~\ref{lem: discrete approximation Gibbs state}
since an entropy difference of order $\mathcal O (V^{1/2})$ is no longer irrelevant. 
Looking back at the proof of Lemma~\ref{lem: discrete approximation Gibbs state}, this may be achieved by replacing 
the density $\rho$ by a density $\rho_V$ that depends on the volume, 
so as to compensate for the loss of mass due to particles concentrating on the single site where the excess of energy is realized. 

\begin{Lemma}\label{lem: high entropy state}
Let $(\rho,\varepsilon) \in R_{\mathrm{inf}} \backslash \partial R_{\mathrm{inf}}$, 
let 
$$
	\rho_V \; = \;  \rho - \frac{\lfloor (V \Delta \varepsilon)^{1/2}\rfloor}{V}  
$$
and let $\nu_V$ be such that $\langle \gamma (0) \rangle_{\nu_V,0} = \rho_V$, assuming that $V$ is large enough so that $\rho_V > 0$. 
For any such $V$, 
there exists $p_V \in \mathcal P_{\rho,\varepsilon,V}$ so that 
\begin{equation}\label{eq: entropy difference improved}
	|s(p_V) - s(p_{\nu_V,0})| \; = \; \mathcal O (V^{-3/4}) 
\end{equation}
as $V \to \infty$ and $|\mathrm{supp} (p_V)| = \mathcal O (\ln V)$. 
\end{Lemma}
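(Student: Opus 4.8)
The plan is to mirror the three-step construction in the proof of Lemma~\ref{lem: discrete approximation Gibbs state}, but starting from the infinite-temperature Gibbs state $p_{\nu_V,0}$ at the \emph{reduced} density $\rho_V$ rather than at $\rho$. Writing $K=\lfloor(V\Delta\varepsilon)^{1/2}\rfloor$, I would set
\begin{equation*}
	p^{(0)}(n)=\frac{\lfloor V p_{\nu_V,0}(n)\rfloor}{V}\quad\text{for }n\ge 1,\ n\ne K,\qquad p^{(0)}(K)=\frac1V,
\end{equation*}
and $p^{(0)}(0)=1-\sum_{n\ge 1}p^{(0)}(n)\ge 0$ (valid for $V$ large, since $Vp_{\nu_V,0}(K)\to 0$), so that $p^{(0)}\in\mathcal P_V$ with $|\mathrm{supp}(p^{(0)})|=\mathcal O(\ln V)$. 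Then I would apply the corrections \eqref{eq: p 1 measure} and \eqref{eq: p 2 measure} verbatim to obtain $p^{(1)}$ and $p^{(2)}=:p_V$. The parity remark in the proof of Lemma~\ref{lem: discrete approximation Gibbs state} (using $n^2\equiv n\pmod 2$) still guarantees $V\delta_2\in\N$, the density and energy constraints hold by construction, and $p^{(2)}(0),p^{(2)}(1),p^{(2)}(2)\ge 0$ for $V$ large, so $p_V\in\mathcal P_{\rho,\varepsilon,V}$ and $|\mathrm{supp}(p_V)|=\mathcal O(\ln V)$.

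The purpose of the shift $\rho_V=\rho-K/V$ is that the single condensate site, carrying $K\approx(V\Delta\varepsilon)^{1/2}$ particles, replaces exactly the mass $K/V$ removed from the reference density: hence $\sum_n n\,p^{(0)}(n)=\rho_V+K/V+\mathcal O((\ln V)^2/V)=\rho+\mathcal O((\ln V)^2/V)$, and therefore $|\delta_1|=\mathcal O((\ln V)^2/V)$, instead of the $\mathcal O(V^{-1/2})$ produced by the construction at density $\rho$. For the energy constraint I would use the explicit identity $\langle(\gamma(0))^2\rangle_{\nu_V,0}=2\rho_V^2+\rho_V$ (the relation \eqref{eq: critical density} at $\beta=0$), which gives $\langle(\gamma(0))^2\rangle_{\nu_V,0}+\Delta\varepsilon=\varepsilon-(4\rho+1)K/V+\mathcal O(V^{-1})$; together with $K^2/V=\Delta\varepsilon+\mathcal O(V^{-1/2})$ (the error coming only from the fractional part of $(V\Delta\varepsilon)^{1/2}$) this yields $|\delta_2|=\mathcal O(V^{-1/2})$. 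So $\delta_2$ remains as large as before — only $\delta_1$ has improved — and the point is to show that the $\delta_2$-correction costs essentially no entropy.

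For the entropy estimate I would split $|s(p_V)-s(p_{\nu_V,0})|\le|s(p^{(0)})-s(p_{\nu_V,0})|+|s(p^{(1)})-s(p^{(0)})|+|s(p^{(2)})-s(p^{(1)})|$. The first term is $\mathcal O((\ln V)^2/V)$ by a routine estimate (the $\mathcal O(\ln V)$ flooring errors are each at most $1/V$ and carry a logarithmic weight, the tail of $s(p_{\nu_V,0})$ beyond $\mathrm{supp}(p^{(0)})$ is negligible, and the condensate site contributes $-\tfrac1V\ln\tfrac1V=\tfrac{\ln V}{V}$); the second is $\mathcal O(\delta_1)=\mathcal O((\ln V)^2/V)$ since $p^{(0)}(0),p^{(0)}(1)$ are bounded away from $0$. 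For the third term, the key observation — and the heart of the lemma — is that the reference state has $\beta=0$, so $p_{\nu_V,0}(1)^2=p_{\nu_V,0}(0)\,p_{\nu_V,0}(2)$; hence a second-order Taylor expansion of $x\mapsto -x\ln x$ about the order-one values $p^{(1)}(0),p^{(1)}(1),p^{(1)}(2)$ gives
\begin{equation*}
	s(p^{(2)})-s(p^{(1)})=\delta_2\ln\frac{p^{(1)}(1)^2}{p^{(1)}(0)\,p^{(1)}(2)}+\mathcal O(\delta_2^2)=\delta_2\cdot\mathcal O\!\left(\frac{(\ln V)^2}{V}\right)+\mathcal O(V^{-1})=\mathcal O(V^{-1}),
\end{equation*}
where the middle equality uses that $p^{(1)}(i)$ differs from $p_{\nu_V,0}(i)$ by $\mathcal O((\ln V)^2/V)$ for $i=0,1,2$. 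Adding the three pieces gives $|s(p_V)-s(p_{\nu_V,0})|=\mathcal O((\ln V)^2/V)$, which is in particular $\mathcal O(V^{-3/4})$. The main obstacle is precisely this cancellation: without exploiting $\beta=0$ the $\delta_2$-shift would cost $\mathcal O(\delta_2)=\mathcal O(V^{-1/2})$ and no improvement over Lemma~\ref{lem: discrete approximation Gibbs state} would follow; everything else is the bookkeeping on flooring errors already carried out there.
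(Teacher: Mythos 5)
Your construction is correct, and it reaches the conclusion by a mechanism that is genuinely different from the paper's at the decisive step. The paper also starts from the reduced-density geometric state $p_{\nu_V,0}$ with a condensate site at $\lfloor (V\Delta\varepsilon)^{1/2}\rfloor$, but it inserts a \emph{second} special site carrying $\lfloor \alpha V^{1/4}\rfloor$ particles, with $\alpha=\theta_V^{1/2}$ tuned so that this mesoscopic site absorbs the leftover energy surplus $\theta_V V^{-1/2}$ (one checks $\theta_V\ge 0$, which is needed to take the square root); after that both $\delta_1$ and $\delta_2$ are $\mathcal O(V^{-3/4})$ and the entropy bookkeeping is literally the first-order estimate of Lemma~\ref{lem: discrete approximation Gibbs state}, giving the stated $\mathcal O(V^{-3/4})$. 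You instead accept $\delta_2=\mathcal O(V^{-1/2})$ and observe that the first-order entropy cost of the $(+\delta_2,-2\delta_2,+\delta_2)$ shift is $\delta_2\ln\frac{p^{(1)}(1)^2}{p^{(1)}(0)p^{(1)}(2)}$, which nearly vanishes because the $\beta=0$ reference is geometric, $p_{\nu_V,0}(1)^2=p_{\nu_V,0}(0)p_{\nu_V,0}(2)$, leaving only $\mathcal O(\delta_2^2)=\mathcal O(V^{-1})$; your overall bound $\mathcal O((\ln V)^2/V)$ is in fact sharper than the lemma's $\mathcal O(V^{-3/4})$, and you avoid both the auxiliary site and the positivity check on $\theta_V$. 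The trade-off is that your cancellation is specific to the infinite-temperature (geometric) marginal and requires the second-order Taylor control of $s$ near $p^{(1)}$ (probabilities at sites $0,1,2$ bounded away from $0$, which you have), whereas the paper's energy-absorbing-site trick needs no structure of the reference measure beyond what Lemma~\ref{lem: discrete approximation Gibbs state} already used. Both arguments are sound; since what is needed downstream in the proof of Theorem~\ref{th:single site concentration} is only that the error be $o(V^{-1/2})$ relative to the $V^{1/2}$-scale entropy differences, your version serves the same purpose.
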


\begin{proof}
We follow the same lines as in the proof of Lemma \ref{lem: discrete approximation Gibbs state}. 
We only need to consider $V \ge \tilde V_0$ for some fixed $\tilde V_0$.
Let us define $p^{(0)}$ by 
\begin{align*}
	&p^{(0)} (n) \; = \;  \frac{\lfloor V p_{\nu_V,0} (n) \rfloor}{V} 
	\qquad\text{for}\qquad n\ge 1, \quad n \ne  \lfloor (V\Delta \varepsilon)^{1/2}\rfloor,
	\quad n \ne \lfloor \alpha V^{1/4} \rfloor \, , \\  
	&p^{(0)} (\lfloor (V\Delta \varepsilon)^{1/2}\rfloor) \; = \; \frac{1}{V} \, ,
	\qquad 
	p^{(0)} (\lfloor \alpha V^{1/4} \rfloor) \; = \; \frac1V
\end{align*}
for some $\alpha \ge 0$ to be determined later, and $p^{(0)}(0) = 1 - \sum_{n\ge 1} p^{(0)} (n)$. 
Since $p^{(0)}(0) \ge p_{\nu_V,0}(0) - 2/V$, it holds that $p^{(0)} \in \mathcal P_V$ for $\tilde V_0$ large enough. 
The measures $p^{(1)}$ and $p^{(2)}$ are defined as in the proof of Lemma \ref{lem: discrete approximation Gibbs state} 
through \eqref{eq: p 1 measure} and \eqref{eq: p 2 measure} respectively. 
To get a bound on $\delta_1$, we compute it from its definition in \eqref{eq: p 1 measure}: 
$$
	\delta_1 
	\; = \;  
	\left( \rho - \rho_V - \frac{\lfloor (V \Delta \varepsilon)^{1/2} \rfloor}{V} \right)
	\; + \;  \sum_{n\ge 1} \frac{\lfloor V p_{\nu_V,0}(n)\rfloor}{V} (n - \rho_V)
	- \frac{\lfloor \alpha V^{1/4} \rfloor}{V}
	+ \frac{\lfloor \rho V \rfloor - \rho V}{V}.
$$
The term in the parenthesis vanishes by the definition of $\rho_V$ and the rest is seen to be $\mathcal O (V^{-3/4})$ in absolute value for any fixed $\alpha \ge 0$. 
Thus $|\delta_1|  = \mathcal O (V^{-3/4})$.
Next we compute $\delta_2$ from its definition in \eqref{eq: p 2 measure}:
\begin{align*}
	2 \delta_2
	\; = &\;
	\varepsilon - \langle (\gamma(0))^2 \rangle_{\nu_V,0} - \frac{\lfloor ( V \Delta \varepsilon)^{1/2}\rfloor^2}{V} - \frac{\lfloor \alpha V^{1/4} \rfloor^2}{V}\\
	&
	\; + \;  \sum_{n\ge 1} \frac{p_{\nu_V,0}(n)}{V} (n^2 - \langle (\gamma(0))^2 \rangle_{\nu_V,0}) 
	\; + \;  \frac{\lfloor \varepsilon V \rfloor - \varepsilon V + \wp}{V}
	\; - \; \delta_1 \, .
\end{align*}
The sum of the terms in the second line is seem to be $\mathcal O (V^{-3/4})$, 
and we now want to adjust $\alpha$ so that the sum of terms in the first line is too. 
Since both $\langle (\gamma(0))^2 \rangle_{\nu,0} = 2 \rho^2 + \rho$ and $\langle (\gamma(0))^2 \rangle_{\nu_V,0} = 2 \rho_V^2 + \rho_V$, we write
\begin{align*}
	&\varepsilon - \langle (\gamma(0))^2 \rangle_{\nu_V,0} - \frac{\lfloor ( V \Delta \varepsilon)^{1/2}\rfloor^2}{V} \\
	&\phantom{.}\qquad\qquad\phantom{.} = \; 
	\langle (\gamma(0))^2 \rangle_{\nu,0} - \langle (\gamma(0))^2 \rangle_{\nu_V,0} + \frac{V \Delta \varepsilon - \lfloor ( V \Delta \varepsilon)^{1/2}\rfloor^2}{V} \\
	&\phantom{.}\qquad\qquad\phantom{.} = \; 
	2 \left(\rho^2 - \left(\rho - \frac{\lfloor ( V \Delta \varepsilon)^{1/2}\rfloor}{V}\right)^2\right) + \frac{\lfloor ( V \Delta \varepsilon)^{1/2}\rfloor}{V}
	+ \frac{V \Delta \varepsilon - \lfloor ( V \Delta \varepsilon)^{1/2}\rfloor^2}{V} \\
	&\phantom{.}\qquad\qquad\phantom{.} = : \; \frac{\theta_V}{V^{1/2}}
\end{align*}
with $0 \le \theta_V = \mathcal O (1)$. 
Therefore, taking $\alpha = \theta_V^{1/2}$, one finds that $|\delta_2| = \mathcal O (V^{-3/4})$.  
One concludes as in the proof of Lemma \ref{lem: discrete approximation Gibbs state}, and one sets finally $p_V = p^{(2)}$.
\end{proof}

Thanks to the three above lemmas, we can now come to the 

\begin{proof}[Proof of Theorem \ref{th:single site concentration}]  
The micro-canonical probability of an event $E \subset \Gamma_{\rho,\varepsilon,V}$ is given by 
$$
	P_{\rho,e,V}^{\mathrm{mic}} (E)
	\; = \; 
	\frac{|E|}{|\Gamma_{\rho,\varepsilon,V}|} .
$$
We only will need upper bounds on this probability and, proceeding as in the proof of Theorem \ref{th: equivalence of ensembles relative entropy}, 
we use Lemma \ref{lem: high entropy state} to get 
\begin{equation}\label{eq: proof end 1}
	\ln |\Gamma_{\rho,\varepsilon,V}| \; \ge \; s(p_V) V - c (\ln V)^2 \; \ge \; s(p_{\nu_V,0}) V - c' V^{1/4}
\end{equation}
for some $c,c' < + \infty$, and $p_V$ and $p_{\nu_V,0}$ defined in Lemma \ref{lem: high entropy state}. 

We will consider three different events. Let first 
\begin{equation*}
	A \; = \;  \left\{ \gamma \in \Gamma_{\rho,\varepsilon,V} \; : \;  \max_{x\in V} \gamma (x) \,\le\, (V \Delta \varepsilon)^{1/2} - V^{\alpha} \right\} 
\end{equation*}
with some $0 < \alpha < 1/2$ to be specified later. 
Let $I = \lfloor (V \Delta \varepsilon)^{1/2} - V^{\alpha} \rfloor$. 
By Lemma~\ref{lem: entropy gibbs on truncated space}, 
there exist $(\nu,\beta)\in \R^2$ such that the constraints \eqref{eq: density energy constraints for measures} are satisfied with $p = p_{\nu,\beta}^I$. 
Crucially, the Gibbs state $p_{\nu,\beta}^I$ has maximal entropy among all states satisfying the constraints \eqref{eq: density energy constraints for measures}
and being concentrated on $\{ 0 , \dots , I \}$. 
Hence, using Lemma~\ref{lem: number of states} to bound the number of states, we obtain 
\begin{equation}\label{eq: proof end 2}
	|A| \; \le \; \ed^{cV^{1/3}\ln V + s(p_{\nu,\beta}^I) V} 
\end{equation}
for some constant $c< + \infty$. 
By Lemma ~\ref{lem: entropy gibbs on truncated space} again,
$$
	s(p_{\nu,\beta}^I)
	\; = \; 
	s(p_{\nu_I,0}) + \mathcal O (I^{-2}\ln I)
	\; = \; 
	s(p_{\nu_I,0}) + \mathcal O (V^{-1}\ln V)
$$ 
Hence, inserting this estimate in \eqref{eq: proof end 2} and using \eqref{eq: proof end 1}, we get 
\begin{equation}\label{eq: estimate end 4}
	P_{\rho,e,V}^{\mathrm{mic}} (A)
	\; \le \; 
	\ed^{- (s(p_{\nu_V,0}) - s(p_{\nu_I,0})) V + \mathcal O (V^{1/3}\ln V)}
\end{equation}
An explicit computation shows that the entropy increases as a function of the density, i.e.\@ there exists $c> 0$ so that 
\begin{align*}
	s(p_{\nu_V,0}) - s(p_{\nu_I,0})
	\; &\ge \; 
	c (\rho_V - \rho_I)
	\; = \; 
	c \left( \frac{\Delta \varepsilon}{I} - \frac{\lfloor (V \Delta \varepsilon)^{1/2} \rfloor}{V} \right) \\
	\; &= \; 
	c \left( \frac{\Delta \varepsilon}{\lfloor (V \Delta \varepsilon)^{1/2} - V^\alpha\rfloor} -  \frac{\lfloor (V \Delta \varepsilon)^{1/2} \rfloor}{V}  \right)
	\; \ge \; 
	\frac{c'}{V^{1 - \alpha}} 
\end{align*}
for some $c' > 0$. 
Inserting this estimate in \eqref{eq: estimate end 4}, we find that $P_{\rho,e,V}^{\mathrm{mic}} (E) \to 0$ as $V \to \infty$, provided that $\alpha > 1/3$.

Next, let us consider the event 
$$
	B \; = \;  \left\{ \gamma \in \Gamma_{\rho,\varepsilon,V} \; : \;  \max_{x\in V} \gamma (x) \,\ge\, (V \Delta \varepsilon)^{1/2} + V^{\alpha} \right\} 
$$
with some $0 < \alpha < 1/2$ to be specified later. 
As before, we use Lemma~\ref{lem: number of states} to get the bound 
$$
	|B| \; \le \; \ed^{c V^{1/3} \ln V + \max_{p \in \tilde B} s(p)}
$$
for some $c< + \infty$, with $\tilde B = \{ p \in \mathcal P_{\rho,\varepsilon,V} : p = p(\gamma,\cdot) \text{ for some } \gamma \in B \}$
where $p(\gamma,\cdot)$ is defined by \eqref{eq: density of a configuration}. 
Given $p \in \tilde B$, there exists $n_0 \ge (V \Delta \varepsilon)^{1/2} + V^\alpha$ such that $p(n_0) \ge 1/V$. 
We decompose $p\in \tilde B$ as
$$
	p \; = \; \frac{1}{V} \delta_{n_0} + \left(1 - \frac{1}{V}\right) \tilde p
$$
and 
$$
	s(\tilde p) 
	\; = \;
	\frac{V}{V-1} s\left( p - \frac1V \delta_{n_0}\right) - \frac{V}{V-1} \ln \frac{V}{V-1}
	\; = \; 
	\frac{V}{V-1} s(p) + \mathcal O \left(\frac{\ln V}{V}\right) 
$$
and hence 
$
	s(p) \, \le \, s(\tilde p) +  \mathcal O (V^{-1}\ln V).
$
Now, $\sum_{n\ge 0}n \tilde p(n) \le \rho'_V$ with 
$$
	\rho'_V \; = \;  \rho - \frac{(V \Delta \varepsilon)^{1/2} + V^\alpha}{V}
$$
and therefore 
$
	s(\tilde p) \le s(p_{\nu'_V,0})
$
where $\nu'_V$ is such that $\rho_V' = \langle \gamma (0) \rangle_{\nu'_V,0}$, 
since the Gibbs states at infinite temperature maximize the entropy over the states with a given density,
and since the entropy of these states increases with the density. 
Hence 
$$
	|B| \; \le \; \ed^{s(p_{\nu'_V,0}) + c V^{1/3} \ln V}.
$$
Proceeding then as for the event $A$, we arrive at
	\begin{equation*}
	P_{\rho,e,V}^{\mathrm{mic}} (B)
	\; \le \; 
	\ed^{- (s(p_{\nu_V,0}) - s(p_{\nu'_V,0})) V + \mathcal O (V^{1/3}\ln V)}
\end{equation*}
and as before, we find that $P_{\rho,e,V}^{\mathrm{mic}} (B) \to 0$ as $V\to \infty$ provided that $\alpha > 1/3$. 

Finally, let us consider the event 
$$
	D \; = \;  \left\{ \gamma \in \Gamma_{\rho,\varepsilon,V} \; : \; \gamma_{\max}' \,\ge\, V^{\alpha'} \right\} 
$$
with some $0 < \alpha' < 1/2$ to be specified later.
Since $P_{\rho,e,V}^{\mathrm{mic}} (A) \to 0$ as $V \to \infty$, it suffices to bound the probability of $D \cap A^c$. 
We proceed through a very similar way as for the event $B$. 
This time, for $p \in \tilde{D \cap A^c}$, with 
$\tilde{D \cap A^c} = \{ p \in \mathcal P_{\rho,\varepsilon,V} : p = p(\gamma, \cdot) \text{ for some } \gamma \in D \cap A^c \}$,
we find either $n_0 \ge (V \Delta \varepsilon)^{1/2} - V^\alpha$ and $n_1 \ge V^{\alpha'}$ such that $p(n_0)\ge 1/V$ and $p(n_1)\ge 1/V$, 
or $n_0 \ge (V \Delta \varepsilon)^{1/2} - V^\alpha$ such that $p(n_0)\ge 2/V$ (the second case corresponding to the situation where $\gamma_{\max}'=\gamma_{\max}$). 
the rest of the proof is analogous, and one finds that $P_{\rho,e,V}^{\mathrm{mic}} (D) \to 0$ as $V\to \infty$ provided that $\alpha' > \alpha$. 


The theorem follows from the fact that the probability of the events $A$, $B$ and $D$ vanishes as $V\to \infty$, provided that $1/3 < \alpha < \alpha' < 1/2$.  
\end{proof}

\bigskip
\textbf{Acknowledgements.}
We thank N.~Starreveld for discussions at an early stage of this project, as well as C.~Bernardin and S.~Olla for pointing out relevant references to us. 
This work was partially supported by the grants ANR-15-CE40-0020-01 LSD and ANR-14-CE25-0011 EDNHS of the French National Research Agency (ANR).


\begin{thebibliography}{99}

\bibitem{abanin_et_al_2015}
D.~A.~Abanin, W.~De Roeck and F.~Huveneers,
Exponentially Slow Heating in Periodically Driven Many-Body Systems,
Physical Review Letters 
115 (25),
256803
(2015). 

\bibitem{bols_de_roeck_2018}
A.~Bols and W.~De Roeck, 
Asymptotic localization in the Bose-Hubbard model,
Journal of Mathematical Physics
59 (2), 
021901
(2018). 

\bibitem{callen}
H.~B.~Callen, 
Thermodynamics and an Introduction to Thermostatistics, 
Willey 
(1985). 

\bibitem{chatterjee_kirkpatrick_2012}
S.~Chatterjee and K.~Kirkpatrick, 
Probabilistic methods for discrete nonlinear Schrödinger equations,
Communications on Pure and Applied Mathematics
65 (5), 
727-757 
(2012).

\bibitem{chatterjee_2014}
S.~Chatterjee,
Invariant measures and the soliton resolution conjecture,
Communications on Pure and Applied Mathematics
67 (11), 
1737-1842 
(2014).

\bibitem{chatterjee_2017}
S.~Chatterjee, 
A note about the uniform distribution on the intersection of a simplex and a sphere. 
Journal of Topology and Analysis
9 (04), 
717-738
(2017).

\bibitem{chelboun_grosskinsky_2014}
P.~Chleboun and S.~Grosskinsky, 
Condensation in stochastic particle systems with stationary product measures, 
Journal of Statistical Physics
154 (1-2), 
432-465 
(2014).

\bibitem{cherny_et_al_2019}
A.~Yu.~Cherny, T.~Engl and S.~Flach,
Non-Gibbs states on a Bose-Hubbard lattice,
Physical Review A
99,
023603
(2019).

\bibitem{deutsch_1991}
J.~M.~Deutsch,
Quantum statistical mechanics in a closed system, 
Physical Review A
43 (4), 
2046
(1991).

\bibitem{de_roeck_huveneers_2014}
W.~De Roeck and F.~Huveneers,
Asymptotic Quantum Many-Body Localization from Thermal Disorder,
Communications in Mathematical Physics
332 (3), 
1017-1082
(2014). 

\bibitem{de_roeck_huveneers_2015} 
W.~De Roeck and F.~Huveneers,
Asymptotic Localization of Energy in Nondisordered Oscillator Chains,
Communications on Pure and Applied Mathematics
68 (9),
1532-1568
(2015).

\bibitem{evans_hanney_2005}
M.~R.~Evans and T.~Hanney, 
Nonequilibrium statistical mechanics of the zero-range process and related models, 
Journal of Physics A: Mathematical and General
38 (19),
R195-R240
(2005). 

\bibitem{grosskinsky_2008}
S.~Grosskinsky, 
Equivalence of ensembles for two-species zero-range invariant measures,
Stochastic Processes and their Applications
118 (8),
1322-1350
(2008). 

\bibitem{grosskinsky_et_al_2003}
S.~Grosskinsky, G.~Sch{\"u}tz and H.~Spohn,
Condensation in the zero range process: stationary and dynamical properties,
Journal of Statistical Physics
113 (3-4), 
389-410
(2003). 

\bibitem{iubini_et_al_2013}
S.~Iubini, R.~Franzosi, R.~Livi, G.-L.~Oppo and A.~Politi,
Discrete breathers and negative-temperature states,
New Journal of Physics
15 (2),
023032
(2013).

\bibitem{lieb_et_al_2005}
E.~H.~Lieb, R.~Seiringer, J~P.~Solovej and J.~Yngvason, 
The mathematics of the Bose gas and its condensation, 
Oberwolfach Seminars, 
Vol.~34,
Birkh\"auser Verlag
(2005).


\bibitem{majumdar_2010}
S.~N.~Majumdar, 
Real-space Condensation in Stochastic Mass Transport Models,
Exact Methods in Low-dimensional Statistical Physics and Quantum Computing: Lecture Notes of the Les Houches Summer School: Volume 89, July 2008,
Oxford University Press
(2010).

\bibitem{mithun_et_al_2018}
T.~Mithun, Y.~Kati, C.~Danieli and S.~Flach, 
Weakly Nonergodic Dynamics in the Gross-Pitaevskii Lattice,
Physical Review Letters
120 (18),
184101
(2018). 

\bibitem{nam_2018}
K.~Nam,
Large deviations and localization of the microcanonical ensembles given by multiple constraints,
arXiv e-prints, 
arXiv:1809.04138
(2018). 

\bibitem{rasmussen_et_al_2000}
K.~\O{}.~Rasmussen, T.~Cretegny, P.~G.~Kevrekidis and N.~Gr\o{}nbech-Jensen,
Statistical Mechanics of a Discrete Nonlinear System,
Physical Review Letters
84 (17),
3740-3743
(2000). 

\bibitem{rumpf_2004}
B.~Rumpf, 
Simple statistical explanation for the localization of energy in nonlinear lattices with two conserved quantities,
Physical Review E
69, 
016618
(2004).

\bibitem{srednicki_1994}
M.~Srednicki, 
Chaos and quantum thermalization,
Physical Review E
50 (2), 
888
(1994). 

\bibitem{touchette_2011}
H.~Touchette,
Ensemble equivalence for general many-body systems,
Europhysics Letters
96 (5),
50010
(2011).

\bibitem{touchette_2015}
H.~Touchette,
Equivalence and Nonequivalence of Ensembles: Thermodynamic, Macrostate, and Measure Levels,
Journal of Statistical Physics
159 (5),
987-1016
(2015).


\end{thebibliography}
\end{document}